\documentclass[reprint]{revtex4-1}

\usepackage[english]{babel}
\usepackage[utf8]{inputenc}
\usepackage{amsmath}
\usepackage{subfigure}
\usepackage{floatrow}
\usepackage{graphicx}
\usepackage{amsthm}
\usepackage{amssymb}

\makeatletter
\renewcommand{\p@subsection}{}
\renewcommand{\p@subsubsection}{}
\makeatother

\newtheorem{problem}{Problem}[section]
\newtheorem{proposition}{Proposition}[section]
\newtheorem{theorem}{Theorem}[section]
\newtheorem{corollary}{Corollary}[section]
\newtheorem{definition}{Definition}[section]

\begin{document}

\title{Interception in Distance-Vector Routing Networks}

\author{David Burstein}
\affiliation{University of Pittsburgh}
\author{Franklin Kenter}
\affiliation{Rice University}
\author{Jeremy Kun}
\affiliation{University of Illinois at Chicago}
\author{Feng Shi\footnote{Corresponding author: bill10@uchicago.edu}}
\affiliation{University of Chicago}

\begin{abstract}
Despite the large effort devoted to cybersecurity research over the last
decades, cyber intrusions and attacks are still increasing. With respect to
routing networks, route hijacking has highlighted the need to reexamine the
existing protocols that govern traffic routing. In particular, our primary
question is how the topology of a network affects the susceptibility of a
routing protocol to endogenous route misdirection. In this paper we define and
analyze an abstract model of traffic interception (i.e. eavesdropping) in
distance-vector routing networks. Specifically, we study algorithms that
measure the potential of groups of dishonest agents to divert traffic through
their infrastructure under the constraint that messages must reach their
intended destinations. We relate two variants of our model based on the allowed
kinds of lies, define strategies for colluding agents, and prove optimality in
special cases. In our main theorem we derive a provably optimal monitoring
strategy for subsets of agents in which no two are adjacent, and we extend this
strategy to the general case. Finally, we use our results to analyze the
susceptibility of real and synthetic networks to endogenous traffic
interception. In the Autonomous Systems (AS) graph of the United States, we
show that compromising only 18 random nodes in the AS graph surprisingly
captures 10\% of all traffic paths in the network in expectation when a
distance-vector routing protocol is in use.
\end{abstract}

\keywords{Traffic interception, Routing networks, Distance-vector routing protocols, Distance fraud}

\maketitle

\section{Introduction}
Several recent events have demonstrated that internet routing protocols are
particularly vulnerable to misdirections in routing \cite{GormanV13, JimCowie,
Madory, Goodin}. This brings up the question: How vulnerable are trust-based
communication protocols to malicious agents who can abuse this trust?

\begin{figure*}
\centering
\includegraphics[width=1 \textwidth]{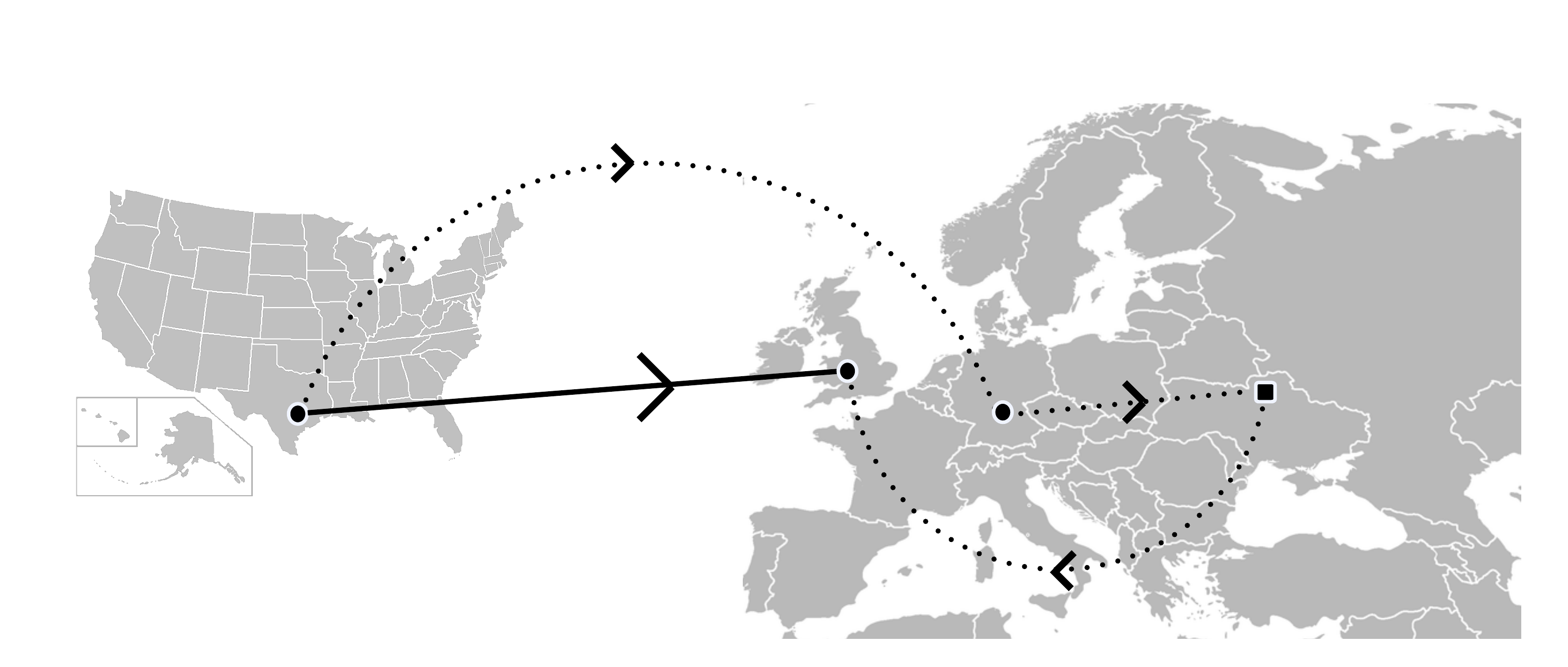}
\caption{An incident in connection with traffic interception as reported in
\cite{Madory}, where a ``colluding agent'' (square) in Ukraine, broadcasts a
false distance and intercepts a message sent from an honest agent in Texas
whose target is the British Telecom, that services the UK Atomic Weapon's
Establishment. The solid line indicates how the message should be routed under
the presence of only honest agents.  The dotted line demonstrates how the
message is routed when the ``colluding agent'' (square) broadcasts a false
distance, and reroutes the message to its intended target.  The above image is
a simplification of the documented rerouting, as the circle in the UK is in
fact representative of multiple honest agents (autonomous systems).}
\label{fig:rerout} 
\end{figure*}

To answer this question, we present and study a model of traffic interception
in routing networks largely inspired by real-life incidents. See Figure
\ref{fig:rerout} for an example. In practice, a wide range of routing protocols
are developed and implemented for communications. For example, the Border
Gateway Protocol (BGP) is used for inter-domain communications between
autonomous systems in the modern Internet; distance-vector and link-state
routing protocols are used for intra-domain communications. Instead of going
into the technical details of the protocols and their business models, we study
a mathematical and parsimonious model for traffic interception in general. The
model closely resembles the distance-vector protocols such as the Routing
Information Protocol (RIP) \cite{Hedrick, Malkin}. Due to their simplicity and
efficiency, distance-vector protocols are widely used in moderate-size IP
networks and ad hoc mobile networks \cite{Hu2003175, 579225}. Therefore, in
this work we study a theoretical model for ``distance frauds'' \cite{1039858}
in an abstract setting of distance-vector protocols, and aim to provide a
starting point to investigate security issues of other complex routing
protocols. 

The model is stated formally in Section \ref{sec:models}, but can be loosely
described as follows. For a graph $G$ in which vertices are agents, a subset
$S$ of agents are designated ``colluders'' and the rest are ``honest''. Honest
agents maintain a distance-vector recording their believed distances to all
other agents in the network, broadcasting this information to their immediate
neighbors in each round and updating their beliefs with the information
broadcasted to them.  When honest agents send or forward a message, they route
it to any neighbor that is closest to the message's recipient. Meanwhile,
colluding agents have knowledge about the entire graph and want to maximize the
number of messages that are routed through at least one member of $S$. They can
achieve this by lying in their broadcasts to their neighbors making it appear
they are closer to the message's destination than they actually are. As a
result, the honest agents will unknowingly forward messages to the colluding
agents along a potentially suboptimal path (See, for example, Figure
\ref{fig:rerout}).

While it would seem easy for the colluding agents to lie in order to intercept
messages, we impose the additional requirement that every message must
eventually reach its intended destination, as frequently dropped messages would
raise an alarm causing the colluders to be discovered. This makes strategy
design difficult for the colluding agents, as they must carefully balance lying
in order to attract messages while not overstating their proximity to the
recipient which would cause never-ending cycles. 

Solving this problem not only provides a tool for analyzing the susceptibility
of a network to endogenous information interception, but further informs
protocol designers of the vulnerability of honest agents naively following
protocols of this type. Our contributions are to analyze this model at many
levels from both theoretical and practical perspectives. In this article, we
accomplish the following:

\begin{itemize}
\item Formulate appropriate generalized combinatorial models for trust-based
communication in networks (Section \ref{sec:models});

\item Demonstrate that the case where each colluding agent may broadcast a
different piece of information to each of its neighbors, the non-uniform
broadcast model, reduces to the case where each individual colluding agent must
broadcast the same information to its neighbors, the uniform broadcast model
(Proposition \ref{prop:uniform-reduction});

\item Given a set of malicious agents $S$ where none of them are adjacent,
provide the optimal strategy regarding how the agents of $S$ should broadcast
(Theorem \ref{thm:optimal-separated});

\item Show that optimally choosing a set of colluding agents is NP-complete
(Section \ref{sec:combinatorial}); and 

\item Provide simulations demonstrating that, for various types of real-world
and synthetic networks, a very small portion of colluding agents, acting
strategically,  can in fact intercept a significant proportion of messages.
These results add a new perspective on the attack tolerance of scale-free
networks~\cite{AlbertJB00}. In addition to being vulnerable to connectivity
attacks by removing high degree nodes, they are vulnerable even to random
interception attacks (Section \ref{sec:simulations}).

\end{itemize} 

This paper is organized as follows. In Section~\ref{sec:related} we review
related work. In Section~\ref{sec:models} we define our models, and discuss
their complexity in Section 4. In Section~\ref{sec:strategies} we define our
strategies, and in Section~\ref{sec:separated} we prove the optimality of our
strategy (Theorem~\ref{thm:optimal-separated}). We generalize the strategy of
Theorem~\ref{thm:optimal-separated} to the general case of connected agents in
Section~\ref{sec:unseparated}. In Section~\ref{sec:simulations} we empirically
evaluate the quality of our strategies, and in Section~\ref{sec:conclusion} we
conclude with open problems.

\section{Related work} \label{sec:related}

There has been an extensive literature on secure designs of the distance-vector
routing protocol \cite{Hu2003175, 579225, Tao, Lopez} and on algorithms that
detect potential security issues \cite{1203939, 1258478, 1367222}. Our work
deviates from those lines in mainly two ways. First, as discussed in the
Introduction we study a theoretical model for ``distance frauds''
\cite{1039858} in an abstract setting of distance-vector protocols. Second,
rather than developing algorithms to detect malicious agents we aim to
understand the potential of groups of dishonest agents to divert traffic
through their infrastructure. Recent high-profile internet
outages~\cite{Stone08} have proven that traffic attraction or interception is
an important security issue of modern Internet, attracting large attention on
the threat models, incentive schemes, and secure alternatives for
BGP~\cite{ButlerFMR10,GoldbergSHR10,BallaniFZ07,NordstromD04,GoldbergHJRW08,LevinSZ08}.

Our work is in the spirit of~\cite{BallaniFZ07}, where the authors seek to
provide estimates on the amount of traffic that can be intercepted or hijacked
by colluding agents.  In contrast to their work, we want to focus on
information interception and we disallow ``black-holes'', a type of attack that
results in numerous dropped messages.  Consequently, identifying strategies
that promote information monitoring by maximizing traffic to the colluding
agents  becomes highly non-trivial.  In order to ease the theoretical analysis,
we move away from BGP and adopt the framework of distance-vector routing
protocol.  Furthermore, we assume that the honest agents are unsuspecting of
the possibly nefarious objectives of the colluding agents.  We prioritize
studying traffic interception over hijacking attacks with ``black holes'', as
in the latter case the honest agents would immediately realize that one of
their routing paths has been compromised.  Consequently, we believe our model
will serve as a fundamental tool in analyzing the more involved problem (not
discussed here) where honest agents actively seek out the colluding agents, and
the colluding agents must not only optimize the amount of traffic sent to them,
but also they must conceal their identities from the honest agents.

\section{Model and preliminaries} \label{sec:models}

For our study, we focus on a finite, unweighted, undirected graph $G = (V,E)$
with $V = \{1, \dots, n\}$ being the agents. We will write $i \sim j$ if
$\{i,j\} \in E$, i.e., nodes $i$ and $j$ are adjacent.

The model employs two types of agents represented by the vertices. First are
{\it honest} agents which follow an automated protocol. This protocol allows
the honest agents to learn about the distances between themselves and various
other agents over time. As a result, for any given destination, each honest
agent learns to which of its neighbors it should ``optimally'' route a message.
Second are {\it colluding agents} who make various choices within mild
constraints. Colluding agents act cooperatively with the goal of intercepting
as many messages as possible. They achieve this by broadcasting (potentially)
false distances thereby tricking the honest agents into routing messages along
a suboptimal path.

\subsection{What honest agents do} \label{sec:synchronization}

Each honest agent $i$ maintains an $n$-length vector, $(\rho^t(i,1),
\rho^t(i,2),\cdots,\rho^t(i,n))$, corresponding to all the believed distances
for agent $i$ between $i$ and every agent at time $t$.

Each agent, $i$, initially sets $\rho^0(i,j)$ to

\begin{equation} \label{uniformrho}
   \rho^0(i,j) =
    \begin{cases}
        0 & \text{if } i = j \\
        \infty & \text{otherwise.}
    \end{cases}
\end{equation}
 
While honest agents do not know the topology of the network, they can learn
about the distances between other nodes with a {\it synchronization protocol}.
At each step, each honest agent will receive broadcasts from its neighbors
(including both honest and colluding agents) on their believed distances and
update the information as follows.

\begin{equation}
   \rho^{t+1}(i,j) = \min \{ \rho^{t}(i,j), 1 + \min_{k \sim i} \rho^{t}(k,j) \} \label{uniformd}
\end{equation}

An example of the synchronization process is illustrated in Figure
\ref{fig:synchronization}. One can observe that the distance $\rho^t(i,j)$ is
nonincreasing with respect to $t$. Further, if all agents are honest the
$\rho^t(i,j)$ will converge in at most $n-1$ rounds. We will denote $\rho(i,j)$
the stationary distances in the end of the synchronization protocol. All our
analyses are based on the stationary distances $\rho(i,j)$.

Afterwards, each honest agent, $i$, establishes a {\it forwarding policy} which
is an array of $n$-sets, $\mathbf{b}_{i} = (b_{i,j})_{j=1}^n$ where $b_{i,j}$
is the set of neighbors of $i$ advertising the shortest path to $j$. That is,

\begin{equation}
{b}_{i, j} =  \text{argmin}_{k \sim i} ~ \rho(k,j) \label {uniformb} 
\end{equation}

where the {\it argmin} is a set of all such minimizing arguments. Finally, when
agent $i$ must forward a message whose destination is node $t$, it forwards to
a uniformly random node in $b_{i,t}$. The random choice is made independently
for every forwarding event.

Observe that if all agents are honest, the $\rho(x,y)$ is precisely the graph
distance from $x$ to $y$ which we denote as $d(x,y)$, and also for any
shortest-distance path $x= v_0, v_1, \ldots, v_{\ell} =  y$, we have $v_{i+1}
\in b_{v_i,y}$ for all $i=0, \ldots, \ell-1$. Conversely, for a node $w$ to
optimally route a message to $y$, $w$ simply forwards the message to any member
of $b_{w,y}$.

\begin{figure}
\centering
\includegraphics[width=1\textwidth]{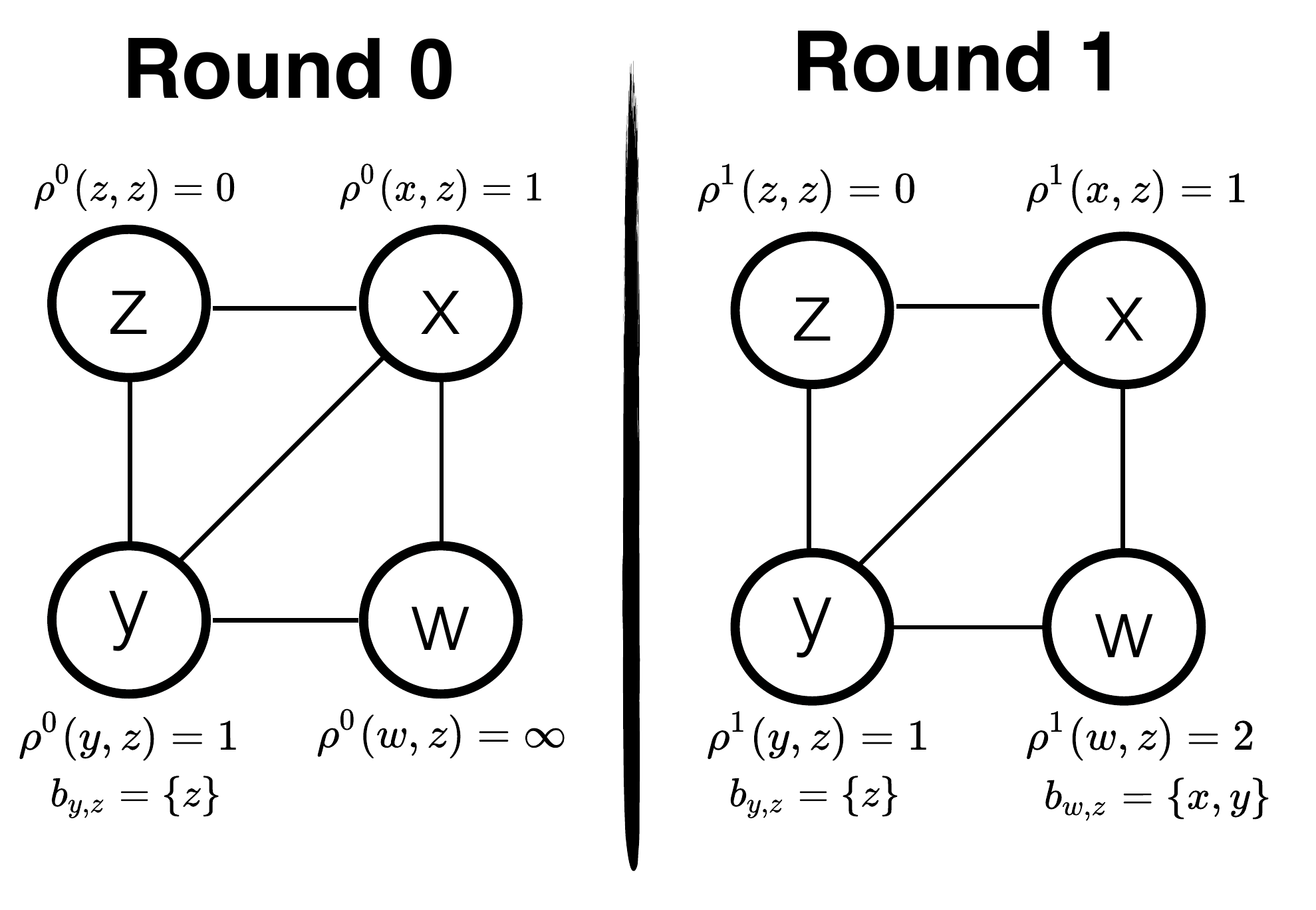}
\caption{(Left) Initially, nodes $x$ and $y$ know that they are neighbors to
node $z$ and have a perceived distance $\rho = 1$ to $z$. In contrast, node $w$
does not know its distance to $z$ and denotes it as $\infty$. (Right) After 1
round, nodes $x$ and $y$ both tell $w$, that they are distance 1 from $z$.
Hence, node $w$ updates is distance $\rho$ to $z$ to be 2 and stores which of
its neighbors ($x$ and $y$) are on the shortest path from $w$ to $z$.}
\label{fig:synchronization}
\end{figure}

\subsection{What colluding agents can do}
\label{sec:uniform-nonuniform-models}

The colluding agents work together and form a \emph{strategy} by choosing a
broadcasting distance $\rho^t(v,\cdot)$,  and a {\it forwarding policy}
$\mathbf{b}_{v}$ for each colluding agent $v$. The colluding agents do not
update their believed distances $\rho$ based on the broadcasting distances from
their neighbors; instead, they can broadcast any distances to gain traffic, and
our goal is to find the ``best'' value of $\rho(v,\cdot)$ for every colluding
agent $v$. In this case broadcasts will not change across rounds, which means
that agent $v$ will broadcast the same $\rho(v,\cdot)$  across all rounds, and
so we drop the superscript for time for colluding agents.  

During each step of the the synchronization protocol, each colluding agent will
broadcast its $\mathbf{\rho}$ to all of its neighbors. The honest neighbors
will perceive these distances to be real and update their believed distances
$\rho$, and $\mathbf{b}$ in equations \ref{uniformd} and \ref{uniformb}
accordingly, converging in $n-1$ steps. 

Allowing any strategy will potentially cause messages to route in a loop
without reaching its destination. For example, see Figure
\ref{fig:admissibility}. Hence, we restrict the space of admissible strategies
to be those that do not deterministically cause such loops. Further, agents
cannot falsely advertise to be destinations they are not. In the language of
BGP, we disallow ``black holes.'' 

To define this formally, we say a path from $s$ to $t$, $s= v_0, v_1, \ldots,
v_{\ell} = t$, is a {\it corresponding path} if $v_{i+1} \in b_{v_i,t}$ for
$i$ an honest agent. That is, a corresponding path has each honest agent
forward each message to a neighbor advertising a minimum distance where
colluding agents choose to whom they forward messages. A strategy is then
\emph{admissible} if there is a corresponding path between every pair of nodes.  

An admissible strategy is one for which every message will with high
probability over tie breaks by honest agents, eventually reach its destination.
Although it may travel along cycles during its routing, with overwhelming
probability the number of steps will be polynomial in the size of the graph.
This may be thought of as the weakest form of avoiding black holes, and it
allows our analysis to focus on ensuring the existence of \emph{some} optimal
path from the source to the target. Equivalently (up to high probability over
tie-breaks), we may assume that honest agents, when faced with a tie between an
option that will cycle and an option that will not cycle, will choose the latter to break
the tie.

It is worth noting that for every set $S$ the strategy where each agent in $S$
acts like an honest agent by broadcasting the true distance is admissible. We
call this the {\it honest strategy}.

\begin{figure*}
\centering
\includegraphics[width=\textwidth]{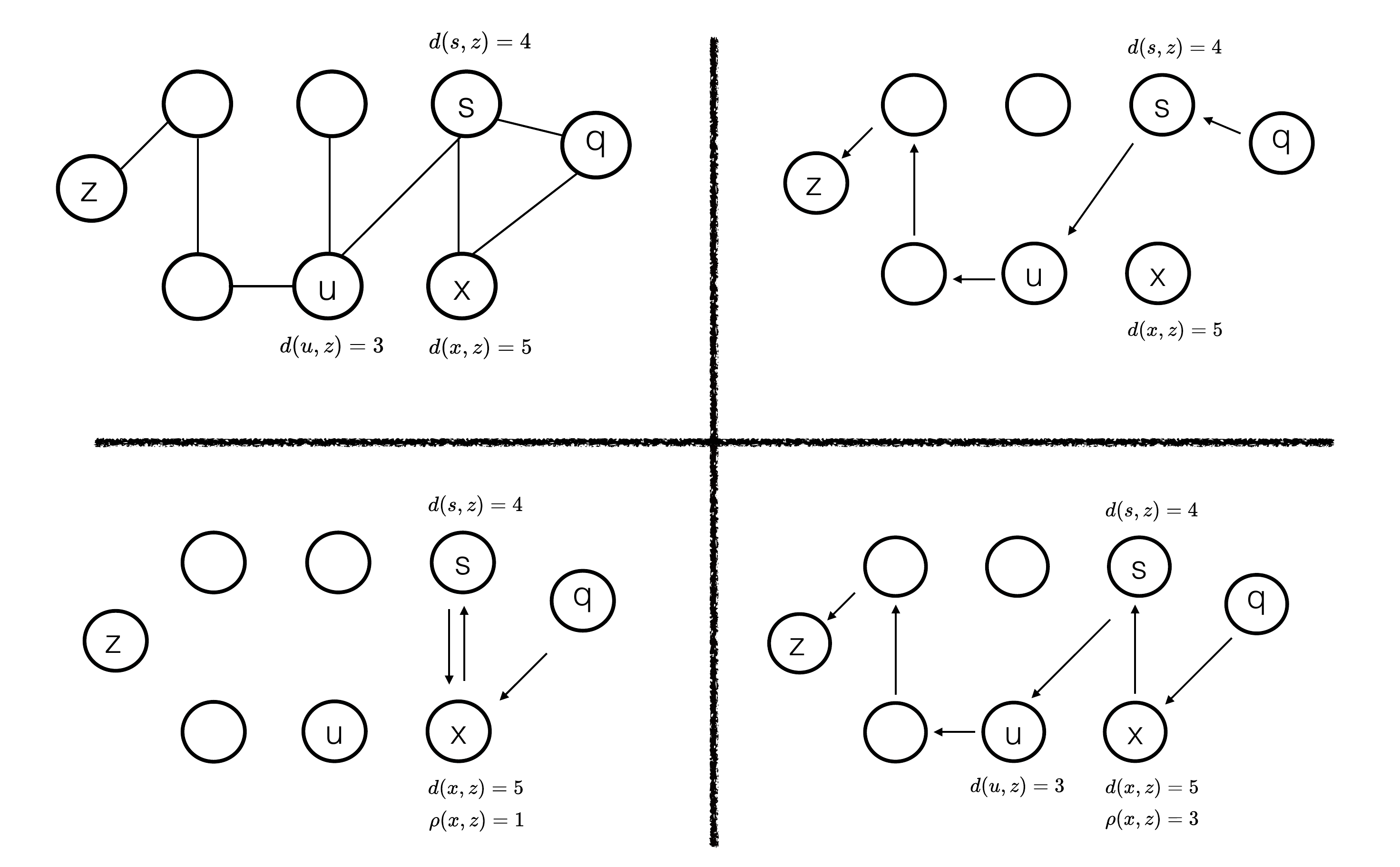}
\caption{For simplicity, in the figure above we drop the time subscript for
$\rho$ and assume the broadcast does not change. Consider an undirected graph
(top left) where the node $q$ wants to send a message to $z$.  If nodes are
broadcasting their true distances (top right), the induced directed graph
illustrates how the message travels from node $q$ to node $z$.  If node $x$, a
colluding node, overzealously broadcasts a distance that is too small to all of
its neighbors (bottom left), the message will never reach $z$ and hence is
inadmissible.  However, if $x$ broadcasts the same false distance (of 3) to all
of its neighbors as in the uniform model (bottom right), node $x$ captures the
traffic coming from $q$.  We construct a corresponding path by considering the
case where node $s$ forwards the message to $u$ (as $u$ and $x$ are
broadcasting the same distance), which demonstrates that the strategy is indeed
admissible.}
\label{fig:admissibility}
\end{figure*}

We analyze the effectiveness of a strategy by the number of node-pairs for
which the colluding agents necessarily intercept a message. In this sense, we
take the worst-case approach for the colluding agents. To make this formal,
given a set of colluding agents $S$ and admissible strategy $\theta$, let
$p_{S,\theta}$ be the proportion of pairs of nodes $\{i, j\}$ such that {\em
every} corresponding path between $i$ and $j$ passes through some node in $S$.
Further, a strategy is \emph{beneficial} if $p_{S,\theta} > p_{S,
\theta_{\textup{honest}}}$, where $\theta_{\textup{honest}}$ is the honest
strategy.

\section{Algorithmic Analysis}

In this section, we explore various algorithmic questions. First, we
demonstrate that optimally selecting a set of colluding agents of a fixed size
is NP-hard. We achieve this by reducing the problem to vertex covering.
Additionally, we give examples where the optimal effectiveness of a set of
colluding agents $S$ is not submodular. Finally, we prove that a variant model
where the colluding agents $S$ may broadcast different, potentially false,
messages to its neighbors reduces to the case where the broadcast is uniform.

\subsection{NP-hardness and submodularity} \label{sec:combinatorial}

We investigate the algorithmic properties of optimally choosing a set of
colluding agents by considering the case when the agents use the honest
strategy, $\theta_{\textup{honest}}$. In which case, how do we pick which nodes
should collude? In fact the problem of optimally selecting nodes reduces to the
following:

\begin{problem}[$p$-Shortest Path Dominating Set] \label{prob:spds}
Given a graph $G = (V,E)$ an integer $k \geq 1$ and $p \in [0,1]$, is there a
set $S \subset V$ of size $k$ so that at least $p \binom{n}{2}$ shortest paths
pass through $S$?
\end{problem}

We abbreviate this by SPDS$(p,k)$ and say a vertex $v$ \emph{covers} a path if
it lies on the path. There are two natural optimization problems associated
with SPDS. The first, MAX-SPDS$(-,k)$, is to maximize the $p$ achieved over all
sets of size $k$. The second, MIN-SPDS$(p,-)$, minimizes $k$ while attaining a
prespecified $p$. SPDS$(p,k)$ is trivially NP-hard because SPDS$(1, k)$ is
VERTEX-COVER.
 
By a standard argument, the function $f: 2^V \to [0,1]$ mapping $S$ to the
proportion of shortest paths covered by $S$ is submodular\footnote{Recall a set
function $f:2^X \to \mathbb{R}$ is \emph{submodular} if for every $S, T \subset
X$ with $S \subset T$ and for every $x \in X \setminus T$, the marginal gain
$f(S \cup \{ x \}) - f(S) \geq f(T \cup \{ x \}) - f(T)$} and monotone.  Hence,
by a classic theorem of Nemhauser, Wolsey, and Fisher~\cite{NemhauserWF78}, the
greedy algorithm provides a $(1-1/e)$-approximation algorithm for
MAX-SPDS$(-,k)$. A slight variant of the greedy algorithm presented by
Wolsey~\cite{Wolsey82} achieves a $p$-proportion of shortest paths with a set
$S$ of size

$$ |S| = \left [ 1 + \log \max_{v \in V} f(\{ v \}) \right ] OPT
       = \left [ 1 + O(\log(n)) \right ] OPT$$
where OPT is the size of the smallest set covering a $p$-proportion of shortest
paths. 

A similar reduction from VERTEX-COVER shows that the problem of picking
colluding nodes \emph{and} a good strategy for lying is also NP-hard. Moreover,
the collusion problem is not submodular. We prove this below, and as such, we
will henceforth focus on the problem of determining the optimal strategy for a
given set of colluding agents.

\begin{proposition} \label{prop:not-submodular}

Fix $\theta(S)$ mapping subsets of vertices to optimal strategies. Define $f:
2^{V(G)} \to \mathbb{R}$ by letting $f(S)$ be the number of shortest paths
passing through $S$ when $S$ uses the strategy $\theta(S)$. Then $f$ is not
submodular.

\end{proposition}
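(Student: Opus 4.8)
The plan is to disprove submodularity by exhibiting an explicit counterexample: a graph $G$ together with nested sets $S \subset T$ and a vertex $x \notin T$ for which the marginal gain of adding $x$ is \emph{strictly larger} in the context of the bigger set, i.e.
$$f(T \cup \{x\}) - f(T) > f(S \cup \{x\}) - f(S).$$
The conceptual reason such complementarity can arise — even though the honest-strategy path-covering function noted above is submodular — is that the admissibility constraint couples the colluders. A lone colluder can only understate its distance modestly before it risks creating a never-ending routing cycle, whereas a connected block of colluders can advertise an aggressive shared false shortcut and still deliver every message by first routing it among themselves along an internal corresponding path and only then releasing it toward the true destination. Thus a newly added colluder may be nearly worthless until enough partners are present to render the lucrative lie admissible.

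First I would design $G$ with two large populations of honest nodes, $L$ and $R$, whose only short interconnections pass through a small gateway gadget $\{a,b,c\}$, arranged so that no single gadget vertex (nor any proper subset) can advertise a false short path that diverts $L$--$R$ traffic while remaining admissible, but the full gadget can. I would then take $S = \{a\}$, $T = \{a,b\}$, and $x = c$. Because $f(S)$ counts node-pairs whose every corresponding path is forced through $S$ under the optimal strategy $\theta(S)$, the claim reduces to controlling the value of $\theta$ on the four sets $\{a\}$, $\{a,c\}$, $\{a,b\}$, and $\{a,b,c\}$.

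Next I would compute (or bound) these four values. For $\{a\}$, $\{a,c\}$, and $\{a,b\}$ I would argue that the best admissible strategy intercepts only the pairs already routed through the chosen vertices under honest routing, up to a negligible surplus, since the gadget's shortcut cannot be activated admissibly without all three vertices; hence $f(\{a,c\}) - f(\{a\})$ is small. For $\{a,b,c\}$ I would exhibit an explicit admissible strategy in which the gadget jointly advertises a distance small enough to capture \emph{all} $L$--$R$ pairs, together with a corresponding path (via the internal gateway ordering) delivering each intercepted message to its destination, so that $f(\{a,b,c\}) - f(\{a,b\})$ is on the order of $|L|\cdot|R|$. Comparing the two marginals yields the strict inequality and contradicts submodularity.

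The main obstacle will be the admissibility bookkeeping, which must be established in both directions. The delicate half is the impossibility argument: I must show that with only $\{a,b\}$ or only $\{a,c\}$ present, \emph{every} admissible broadcast fails to divert the bulk of the $L$--$R$ traffic, because any false distance aggressive enough to attract it also destroys the existence of a corresponding path to some target. The easier half is the existence argument for the full gadget, where I simply display the internal routing that avoids cycles. Calibrating the gadget's advertised distances and the sizes of $L$ and $R$ so that the impossibility direction is airtight, while keeping $G$ small enough to verify the optimal strategies by hand, is where essentially all of the care is required.
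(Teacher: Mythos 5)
Your plan never actually produces a counterexample: the entire argument rests on a three-vertex gateway gadget $\{a,b,c\}$ with the property that no proper subset can admissibly divert the $L$--$R$ traffic but the full gadget can, and you defer both the construction of this gadget and the ``delicate half'' --- the proof that \emph{every} admissible broadcast by $\{a,b\}$ or $\{a,c\}$ fails to divert the bulk of the traffic. That impossibility claim is a universal statement over all admissible strategies of the proper subsets, which is exactly the kind of statement the rest of the paper shows is hard to control (cf.\ Proposition~\ref{prop:rhostar-lower-bound}, which only handles separated agents, and Section~\ref{sec:unseparated}, where optimality for adjacent agents is left open). As written, the proposal is a research program for finding a counterexample, not a proof that one exists; the gap is precisely where you say ``essentially all of the care is required.''

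The paper's proof shows that none of this machinery is needed, because superadditivity already arises from a much blunter mechanism than admissibility-coupled lying: redundant routing. Take $G = K_{m,2}$ with parts $X = \{p,q\}$ and $|Y| = m$, and set $S = \emptyset$, $T = \{p\}$. Every shortest path between two vertices of $Y$ must pass through $X$, so $\{p,q\}$ together capture all $2\binom{m+2}{2}$ pairs; but $q$ alone captures only the $2(m+1)$ pairs having $q$ as an endpoint, since $p$ furnishes an alternative corresponding path for everything else. Hence $f(T \cup \{q\}) - f(T) = O(m^2)$ while $f(S \cup \{q\}) - f(S) = O(m)$, and no analysis of lying strategies or admissibility is required at all. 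If you want to salvage your approach, you would need to either exhibit the gadget explicitly and prove the impossibility direction, or --- more economically --- notice that the worst-case ``every corresponding path'' semantics of $p_{S,\theta}$ already makes a lone colluder nearly useless whenever an honest alternative route exists, which is the one-line observation the paper's example exploits.
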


\begin{proof}

Let $G = K_{m,2}$ be the complete bipartite graph on parts $X,Y$ with $X = \{
p,q \}, |Y|=m$. Let $S = \{ \}, T = \{ p \}$. Then adding $q$ to $T$ captures
every path in $G$. Adding $q$ to $S$ captures only the messages sent to and
from $q$, because other messages are routed via $p$. In particular,

\begin{align*}
   f(S \cup \{ q \}) - f(S) &= 2(m+1) = O(m) \\ 
   f(T \cup \{ q \}) - f(T) &= 2 \binom{m+2}{2} - 2(m+1) = O(m^2),
\end{align*}
disproving submodularity. 
\end{proof}

There are also interesting examples of this that rely on lies in a nontrivial
way, for example when $G = C_n$ for sufficiently large $n$. In this case a
single node has little advantage, but adding a second adjacent node allows for
an interesting collusion. One of the two colluders may broadcast 1 for most
targets, and forward incoming messages to the its neighboring colluder who in
turn forwards along $C_n$ to the target. We discuss the broadcast bounds for
this case in Section~\ref{sec:unseparated}.

\subsection{Non-uniform broadcasting and a reduction} \label{sec:reduction}

Previously, we required each colluding agent to broadcast its false distances
uniformly to all of its neighbors. However, realistically, a colluding agent
could make varying broadcasts to different neighbors. We refer to this as the
{\it nonuniform model} and the previous as the {\it uniform model}.

Formally, the only difference between the models is that for the nonuniform
model, the colluding agents must establish a more precise broadcast strategy,
$\rho(v,t,i)$, for each colluding agent $v$ and each neighbor $i$. 

Nonuniform lies might appear at first glance to provide a substantial increase
in power. In particular, there are many more strategies, and it seems easier to
accidentally introduce a large cycle, so it seems \emph{computationally} harder
to find an optimal solution. However, we show this is not the case. Finding the
optimal nonuniform strategy for a fixed subset is reduces to finding the
optimal uniform strategy. Because the uniform model is also a special case of
the nonuniform model, this shows the two problems are computationally
equivalent. This justifies a focused study of the uniform model.

Consider the following decision problem:

\begin{problem}[UNIFORM-SUBSET-MONITORING]

Given a graph $G = (V,E)$, a subset $S \subset V$ and a $p \in [0,1]$, is there
an admissible strategy $\theta$ for $S$ such that in the uniform broadcast
model $p_{S, \theta} \geq p$?

\end{problem} 

 Similarly for nonuniform lies, we analogously define
NONUNIFORM-SUBSET-MONITORING. We will prove that the nonuniform case reduces to
the uniform case with a small blowup construction.

\begin{proposition} \label{prop:uniform-reduction}

NONUNIFORM-SUBSET-MONITORING reduces to {{UNIFORM-SUBSET-MONITORING}}.

\end{proposition}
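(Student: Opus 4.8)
The plan is to exhibit a polynomial-time many-one reduction via a vertex-splitting (``blowup'') construction whose effect is to trade the power of nonuniform broadcasting for a slightly larger colluding set, each of whose new vertices can be heard by only a single honest agent. A single uniform broadcast from such a vertex then reproduces exactly one nonuniform broadcast of the original, and the two decision problems become equivalent once the threshold $p$ is adjusted to account for the added vertices.

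First I would construct $G'=(V',E')$ from $G$. For each colluder $v\in S$ with neighbors $i_1,\dots,i_d$, delete each edge $\{v,i_k\}$ and add a ``port'' vertex $v_k$ joined both to $i_k$ and to $v$; declare $v$ together with all of its ports to be colluders, and let $S'$ be the union of these gadgets. The only honest neighbor of a port $v_k$ is $i_k$ (its sole other neighbor, the hub $v$, is a colluder that never updates its beliefs from incoming broadcasts), so a single uniform broadcast from $v_k$ is perceived by exactly one honest agent. Given a nonuniform strategy $\theta$ on $(G,S)$, I set $v_k$ to broadcast to $i_k$, for every original target, the value $\rho(v,\cdot,i_k)$ that $v$ used toward $i_k$; the hub and ports receive forwarding policies that route any message arriving at an entry port through the gadget and out of the port corresponding to the neighbor that $v$ would have forwarded to under $\theta$. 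Conversely, from a uniform strategy $\theta'$ on $(G',S')$ I read off $\rho(v,\cdot,i_k)$ as the broadcast of $v_k$ toward $i_k$, recovering a nonuniform strategy on $(G,S)$. The construction adds $\sum_{v\in S}\deg(v)\le 2|E|$ vertices, so it is polynomial, and an edge between two original colluders simply becomes a port-to-port edge wholly inside $S'$.

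The central claim to verify is that this correspondence preserves admissibility and leaves the honest agents' behavior toward original targets unchanged. Concretely, I would argue that every honest agent's stationary perceived distance to, and forwarding set for, each original target is identical in the two games: the honest subgraph is untouched, and each honest agent $i_k$ that formerly heard $v$ now hears $v_k$ broadcasting the same value, so the updates in (\ref{uniformd}) and the argmins in (\ref{uniformb}) coincide. Consequently the corresponding paths between two original vertices in $G'$ are precisely those of $G$ with each visit to $v$ replaced by an entry-port/hub/exit-port detour that stays inside $S'$; hence a pair of original vertices is covered in $G'$ exactly when it is covered in $G$. Moreover, every pair incident to a new vertex has an endpoint in $S'$ and so is automatically covered (admissibility guarantees some corresponding path, and that path trivially meets $S'$). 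Writing $n'=|V'|$, the number of newly created pairs, $\binom{n'}{2}-\binom{n}{2}$, is a strategy-independent constant all of whose pairs are covered, so it suffices to set the target threshold to
\begin{equation*}
p' \;=\; \frac{p\binom{n}{2} + \bigl(\binom{n'}{2}-\binom{n}{2}\bigr)}{\binom{n'}{2}} .
\end{equation*}
With this choice, $(G',S',p')$ is a yes-instance of UNIFORM-SUBSET-MONITORING if and only if $(G,S,p)$ is a yes-instance of NONUNIFORM-SUBSET-MONITORING.

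I expect the main obstacle to be the admissibility bookkeeping in both directions. I must check that the gadget's internal forwarding can realize exactly the forwarding behavior that the single colluder $v$ exhibited under $\theta$, without introducing new deterministic routing cycles, and that for pairs touching the gadget a corresponding path still exists. The forward direction is comparatively easy, since the hub/port forwarding is designed to mimic $v$; the delicate part is the reverse direction, where from an arbitrary admissible uniform $\theta'$ I must show that collapsing each gadget back to $v$ (retaining the per-neighbor broadcasts) yields an admissible nonuniform strategy---that is, that no corresponding path among original vertices is lost and no black hole is created when the entry/exit structure of the gadget is contracted to a single vertex. Verifying that the added gadget targets never shorten an honest agent's route to an original target, and that the port-to-port edges between two original colluders remain internal, are the remaining routine checks.
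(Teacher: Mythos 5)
Your proposal is correct and takes essentially the same approach as the paper's proof: both subdivide each edge incident to a colluding agent with a new colluding ``port'' vertex, let the port's single uniform broadcast reproduce the corresponding nonuniform broadcast of the original colluder, route messages internally through the hub, and adjust the threshold to account for the new pairs, all of which are automatically covered because they touch a colluder. The only notable difference is that your strategy-independent threshold $p' = \bigl(p\binom{n}{2} + \binom{n'}{2} - \binom{n}{2}\bigr)/\binom{n'}{2}$ handles irregular graphs directly, whereas the paper states its formula under a $D$-regularity assumption and treats the irregular case as analogous.
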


\begin{proof}

Given a graph $G$, a subset $S$, and a fraction $p$ for the nonuniform model,
we produce a new graph $G'$, a subset $S'$, and a fraction $p'$ for the uniform
model as follows.  (See figure 4 for an illustration). For simplicity we will
prove the case where $G$ is $D$-regular. Start by setting $G' = G$. For each
edge $e = (u,v)$ where $u \in S$, subdivide $e$ in $G'$ with a new vertex
$w_e$. Also for each such edge, add $u, w_e$ to $S'$.  Finally, set 

\[ 
   p' = \frac{p\binom{|V|}{2} + \binom{|S|D}{2} + |S||V|}{\binom{|V| +
   |S|D}{2}}.
\]

Suppose there is a strategy $\theta$ for $S,G$ achieving a $p$ fraction in the
nonuniform model. We'll convert $\theta$ into a strategy $\theta'$ for $S'$.
Whenever a colluding agent $u \in S$ would broadcast $\rho$ to a neighbor $v$,
we have the agent $w_{(u,v)}$ uniformly broadcast $\rho$.  And whenever a
message goes to $w_{(u,v)}$ with some other destination, $w_{(u,v)}$ forwards
it through $u$, who in turn forwards it to the $w_{(u,v')}$ corresponding to
the same $v'$ that $u$ would forward to in the nonuniform setting. This
simulates $\theta$, and hence achieves the same $p\binom{|V|}{2}$ paths in $G$;
the formula for $p'$ simply counts the paths introduced by the new vertices
(all of which include a colluding agent). So if $p_{S,\theta} \geq p$ in $G$,
$p_{S', \theta'} \geq p'$ in $G'$.

Conversely, we can collapse any uniform strategy for $S'$ into a nonuniform one
for $S$ by contracting all the newly added edges in $G'$ and combining their
broadcasts in the obvious way. The case where $G$ is irregular is analogous,
and it's clear the appropriate $p'$ can be efficiently computed.

\end{proof}

\begin{figure}
\centering
\includegraphics[width=\textwidth]{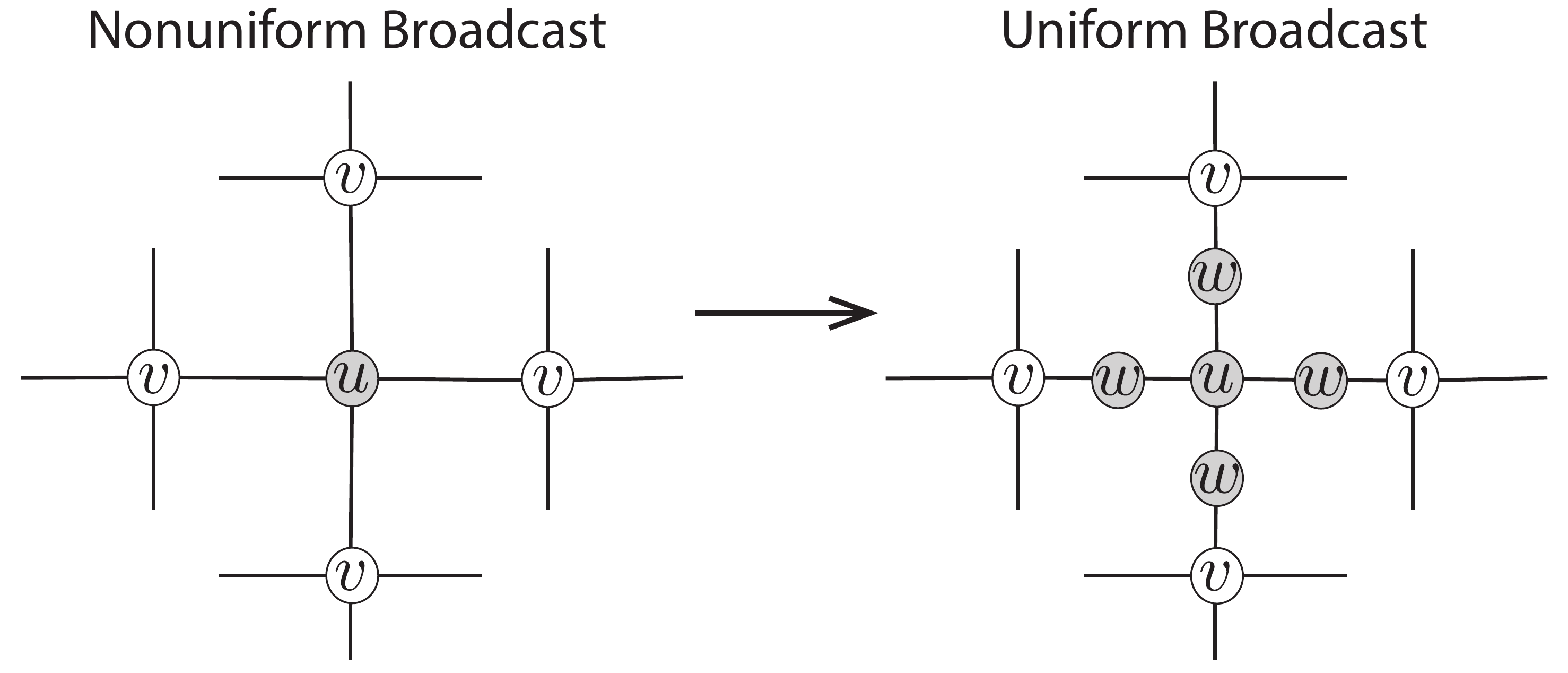}
\caption{An illustration for Proposition \ref{prop:uniform-reduction}. Given a
graph $G$, a set of colluding agents $S$ (shaded circles), and a fraction $p$
for the nonuniform model (left panel), we produce a new graph $G'$, a new set
of colluding agents $S'$ (shaded circles), and a fraction $p'$ for the uniform
model (right panel) by inserting a new colluding agent to every edge connecting
a colluding agent in $S$ and a honest node. We show in the proposition that a
nonuniform broadcasting strategy in the left is equivalent to an uniform
broadcasting strategy in the right.} \label{fig:nonuniform-cycle}
\end{figure}

\section{Strategies for the uniform model} \label{sec:strategies}

We now turn to a detailed study of the uniform broadcast model. In this
section, we define the optimal strategy in Section~\ref{sec:separated} and
prove that it is optimal. Minimizing the distances broadcasted by the colluding
agents optimizes the traffic captured by the agents; therefore, to prove the
optimality of the strategy we show that it achieves the minimal distances that
could be broadcasted by the colluding agents without causing ``black holes''.
Our main result is the following:

\begin{theorem} \label{thm:optimal-separated}

Let $G = (V,E)$ be a graph and $S \subset V$ a fixed subset of vertices such
that the actual distance $d(x, y) \geq 2$ for all $x,y \in S$. Then the
strategy $\rho^*$ defined in Section~\ref{sec:separated} intercepts an optimal
fraction of traffic in $G$.

\end{theorem}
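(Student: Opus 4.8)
The plan is to reduce the optimality claim to two structural facts about admissible strategies and then combine them. Throughout I would fix a target $t$ and exploit that in the uniform model the broadcasts for distinct targets do not interact: the synchronization update for destination $j$ in equation \ref{uniformd} depends only on the perceived distances to $j$, so each colluder's $t$-coordinate $\rho(v,t)$ affects only the corresponding paths bound for $t$. It therefore suffices to argue target-by-target and then aggregate over all source--target pairs when computing $p_{S,\theta}$.

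First I would prove a \emph{monotonicity lemma}: among admissible strategies, lowering the distance $\rho(v,t)$ that a colluder $v \in S$ advertises for a target $t$ can only weakly increase the set of pairs $\{s,t\}$ that are intercepted. The key observation is that decreasing $\rho(v,t)$ propagates only \emph{smaller} perceived distances to $t$, and every such decrease emanates from $v$; hence every new forwarding preference introduced into an argmin set $b_{i,t}$ (equation \ref{uniformb}) points along a path directed toward $v$, and no new option that avoids $S$ is ever created. Consequently the family of corresponding $s$--$t$ paths can only shift toward paths meeting $S$: pairs for which every corresponding path already passed through $S$ retain that property, while previously-uncaptured pairs can only become captured. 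This shows that, \emph{provided a componentwise-minimal admissible broadcast vector exists}, an optimal admissible strategy is attained there.

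Second I would prove a \emph{minimality-and-feasibility lemma}: for each colluder $v$ and target $t$ there is a sharp lower bound $\rho^*(v,t)$ below which no admissible strategy exists; that the strategy $\rho^*$ of Section~\ref{sec:separated} meets this bound in every coordinate; and, crucially, that the resulting coordinatewise-minimal vector $\rho^*$ is \emph{itself} admissible. The lower bound comes directly from the black-hole (admissibility) constraint: for $t$-bound messages to escape $v$ without cycling, $v$ must retain at least one neighbor $u$ whose genuine route to $t$ does not return through $v$, and $v$ cannot advertise a distance so small that it pulls that escape route back onto itself. This pins down $\rho^*(v,t)$ locally from the honest distances $d(\cdot,t)$ in the neighborhood of $v$.

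The separation hypothesis $d(x,y)\ge 2$ for all $x,y\in S$ is what makes the two lemmas cohere, and I expect the feasibility half of the second lemma to be the main obstacle. A priori, driving every colluder simultaneously to its individual minimum could produce a global cycle that visits several colluders, even when each colluder alone is cycle-free, which would destroy admissibility and invalidate comparison against $\theta_{\textup{honest}}$. Non-adjacency rules this out: because no two colluders are neighbors, any corresponding cycle for destination $t$ must traverse honest vertices between consecutive colluders, and along each such honest segment the true distance $d(\cdot,t)$ strictly decreases (honest agents forward only to genuine argmins), so a closed cycle is impossible. Hence the independently computed per-colluder minima assemble into a single globally admissible strategy, which is exactly $\rho^*$. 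Combining feasibility (so $\rho^*$ is admissible), minimality (so no admissible strategy advertises smaller distances in any coordinate), and monotonicity (so smaller admissible broadcasts capture at least as much traffic) shows that $\rho^*$ maximizes $p_{S,\theta}$ over all admissible $\theta$, proving the theorem.
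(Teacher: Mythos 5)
Your high-level skeleton---monotonicity of interception under smaller broadcasts, plus a pointwise lower bound on admissible broadcasts, plus admissibility of the strategy attaining that bound---is exactly the paper's decomposition (Propositions~\ref{prop:rhostar-admissible} and~\ref{prop:rhostar-lower-bound}). However, both of your substantive lemmas are carried out incorrectly, and the errors are not cosmetic. First, your ``minimality-and-feasibility'' lemma claims that the sharp lower bound $\rho^*(v,t)$ is ``pinned down locally from the honest distances $d(\cdot,t)$ in the neighborhood of $v$.'' That local bound is $\max(d(v,t)-2,1)$, i.e., the \emph{single-agent} bound of Corollary~\ref{cor:single-agent}, and assembling those per-colluder local minima yields the independent strategy of Proposition~\ref{prop:independent-agents}, \emph{not} the paper's $\rho^*$. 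The actual $\rho^*$ of Definition~\ref{def:colluding-distance} is genuinely non-local: it minimizes over $j$-th colluding distances $d_j(x,x_i)-2(j-1)+\rho'(x_i,t)$, which depend on the positions of all other colluders, and it can lie far below your local bound (in Figure~\ref{fig:colluding-distance}, $x$ admissibly broadcasts $1$ while $d(x,t)=5$). So your claimed sharp lower bound is simply false---admissible strategies can and do go below it---and the strategy you would end up proving optimal is the strictly weaker independent strategy, which the paper's own simulations show is outperformed by $\rho^*$. The correct lower bound requires the colluding-path machinery (Corollary~\ref{cor:rhostar-bound} feeding into Proposition~\ref{prop:rhostar-lower-bound}).

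Second, your feasibility argument rests on the claim that along honest segments of a message path ``the true distance $d(\cdot,t)$ strictly decreases (honest agents forward only to genuine argmins).'' This is false in the presence of lies: honest agents forward to argmins of the \emph{perceived} distances $\rho$ (equation~\ref{uniformb}), and the entire point of interception is that these perceived argmins divert messages along paths whose true distance to $t$ increases. What does decrease along honest segments is the perceived distance, but that quantity can jump \emph{upward} when the message crosses a colluder (a colluder broadcasting $1$ may forward to a neighbor whose perceived distance is large), which is exactly where cycles threaten; your argument never addresses this. Closing this gap is the hard part of the paper's proof: Proposition~\ref{prop:independent-agents} rules out cycles for improper agents by the exact accounting that a lie is precisely two below the truth (so no vertex's perceived distance is off by more than two), and Proposition~\ref{prop:rhostar-admissible} handles proper agents by comparing the cycle against the witnessing colluding path and showing, via Corollary~\ref{cor:rhostar-bound}, that any departure from that path is a tie-break, so a corresponding path to $t$ exists. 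Neither step is recoverable from the monotone-potential argument you propose; indeed the paper explicitly notes that the natural potential-function approach fails when colluders are at distance exactly two.
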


The proof consists of two parts. After defining the broadcast policy $\rho^*$
in Section~\ref{sec:separated} and the associated forwarding policy, we first
prove in Proposition~\ref{prop:rhostar-admissible} that the strategy is
admissible, and then in Proposition~\ref{prop:rhostar-lower-bound} that
$\rho^*$ is the minimal broadcast for any admissible strategy.

\subsection{A single agent} \label{sec:single-agent}

We start by characterizing the case of a single colluding agent. This case is
useful because it forms a ``base case'' for our more complicated strategies
later. In this case, the best strategy is simple: lie exactly two less than
your true distance to a target. We show that this is guaranteed not to cause
cycles while any larger lie causes a cycle of length 2. 

Before we state the theorem below, we remind readers that $d(x,t)$ is the true
graph distance from node $x$ to node $t$.

\begin{theorem} \label{thm:single-agent}

Let $x$ be a colluding node and $t$ be a target node whose true distance in $G$
is $d(x,t) = k$. Suppose that $x$ broadcasts $\rho(x,t) = k'$. Then this
strategy is admissible and beneficial if and only if one of the following
conditions hold.

\begin{enumerate}
   \item $k' = k - 2$ and there is a neighbor $z$ of $x$ with $d(z,t) = k = k'
+ 2$.
   \item $k-2 \leq k' \leq k-1$ and there is a neighbor $z$ of $x$ with $d(z,t)
= k + 1$, and there is a shortest path from $z$ to $t$ that does not pass
through $x$ (before $x$'s lie).
\end{enumerate}

\end{theorem}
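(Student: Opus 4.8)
The plan is to prove both directions of the biconditional by carefully analyzing when a single colluding agent's lie creates a cycle versus when it successfully diverts traffic. Since we are in the single-agent case, admissibility is equivalent to the absence of deterministic routing cycles, and the only node capable of creating such a cycle is $x$ itself (honest agents always forward toward nodes of strictly smaller believed distance). So the core task is to understand how $x$'s broadcast $\rho(x,t) = k'$ propagates through the synchronization protocol and reshapes the believed distances and forwarding sets of the honest agents.

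\textbf{Setting up the believed distances.} First I would compute, as a function of $k'$, the stationary believed distances $\rho(v,t)$ for every honest node $v$ when $x$ broadcasts $k'$ but all other nodes broadcast truthfully. The key observation is that honest nodes take $\rho(v,t) = \min\{d(v,t),\ 1 + \rho(x,t) + (\text{something involving }d(v,x))\}$; more precisely, the lie lets a node reach $t$ via $x$ at perceived cost $d(v,x) + k'$, so $\rho(v,t) = \min\{d(v,t),\ d(v,x) + k'\}$. From these I can read off the forwarding sets $b_{v,t}$. A cycle of length two occurs exactly when two adjacent nodes each believe the other is strictly closer to $t$; because the believed distances differ by at most one along any edge, the dangerous configuration is a neighbor $z$ of $x$ with $\rho(z,t) = \rho(x,t) = k'$ (so $z$ forwards to $x$) while $x$, using its true forwarding policy, forwards back to $z$. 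I would show that if $k'$ is too small — specifically $k' \le k-3$, or $k' = k-2$ without a neighbor witnessing the true distance $k$ — then every neighbor that $x$ must forward through in order to actually reach $t$ ends up wanting to route back to $x$, forcing the fatal length-two cycle and violating admissibility.

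\textbf{The two conditions.} The content of the theorem is that the lie is safe and profitable in exactly two regimes. In case (1), $k' = k-2$ and $x$ has a neighbor $z$ with $d(z,t) = k$: here $x$ can honestly forward incoming messages to the neighbor $z'$ on its \emph{own} shortest path to $t$ (which has $d(z',t) = k-1$), and because $\rho(z',t) = k-1 > k-2 = k' = \rho(x,t)$ is false but the real forwarding is toward decreasing true distance, no cycle forms; meanwhile the strictly smaller broadcast attracts at least one new path, giving beneficiality. In case (2), the slightly larger permitted lie $k-2 \le k' \le k-1$ is admissible precisely when $x$ has a neighbor $z$ at true distance $k+1$ possessing a shortest $z$-to-$t$ path avoiding $x$ — this neighbor gives a ``drain'' through which diverted traffic can be discharged without returning to $x$. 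I would verify admissibility by exhibiting the explicit corresponding path for every source–target pair, and verify beneficiality by exhibiting at least one pair whose every corresponding path now passes through $x$ but did not before the lie.

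\textbf{The main obstacle.} The hardest part will be proving the \emph{necessity} direction cleanly: showing that these two conditions are not merely sufficient but that \emph{every} admissible and beneficial single-agent lie falls into one of them. This requires a tight case analysis ruling out all intermediate possibilities — in particular, arguing that any $k' \le k-3$ necessarily creates a cycle regardless of how $x$ chooses its forwarding policy, and that when $k' = k-1$ the strategy fails to be beneficial unless the drain-neighbor of condition (2) exists. The delicate point is that $x$ has freedom in its own forwarding choices, so I must quantify over all possible forwarding policies of $x$ and show that none of them can simultaneously avoid a cycle and attract new traffic outside the two stated regimes; I expect this to reduce to a short argument about the parity and magnitude of believed distances along the edges incident to $x$.
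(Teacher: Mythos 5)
There is a genuine gap in your necessity direction. You claim that ``$k' = k-2$ without a neighbor witnessing the true distance $k$'' forces ``the fatal length-two cycle and violat[es] admissibility.'' This is false, and it contradicts the theorem you are proving: condition (2) explicitly allows $k' = k-2$ to be admissible \emph{and} beneficial with no neighbor at distance $k$, as long as some neighbor at distance $k+1$ has a shortest path avoiding $x$. More fundamentally, broadcasting $k' = d(x,t)-2$ is \emph{always} admissible, whatever the neighborhood looks like (this is precisely what the paper's Corollary~\ref{cor:single-agent} and Proposition~\ref{prop:independent-agents} rest on): the neighbor $y$ with $d(y,t) = k-1$ through which $x$ routes still has an honest neighbor broadcasting $k-2$, so the lie creates at most a \emph{tie} at $y$, and admissibility only requires the existence of some corresponding path, i.e., the favorable tie-break. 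The correct statement — and the one the paper proves — is that in this regime the strategy fails to be \emph{beneficial}, not admissible: if every neighbor of $x$ is at distance $k-1$, the lie only introduces ties, so no pair is newly forced through $x$; and if a neighbor at distance $k+1$ has all of its shortest paths through $x$, that traffic was already captured under the honest strategy. Your plan, as written, would attempt to prove a cycle exists where none does.

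A secondary error feeds this one: your characterization of the ``dangerous configuration'' is wrong. You say the danger is a neighbor $z$ with $\rho(z,t) = \rho(x,t) = k'$, ``so $z$ forwards to $x$.'' But for an honest $z$ the stationary distances satisfy $\rho(z,t) = 1 + \min_{w \sim z} \rho(w,t)$, so $\rho(z,t) = k'$ means $z$ has a neighbor it believes is at distance $k'-1 < \rho(x,t)$, and $z$ does \emph{not} forward to $x$. The configuration that forces traffic through $x$ (and can force a 2-cycle when $k' \le k-3$) is $\rho(z,t) = k'+1$ with $x$ the \emph{unique} minimizer over $z$'s neighbors. Getting this strict-minimizer-versus-tie distinction right is exactly what separates inadmissibility ($k' \le k-3$: every neighbor strictly prefers $x$, so $x$'s outgoing message must bounce back) from non-beneficiality ($k' = k-2$ with only distance-$(k-1)$ neighbors: $x$ merely joins existing ties). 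Your sufficiency direction and your propagation formula $\rho(v,t) = \min\{d(v,t),\, d(v,x) + k'\}$ are sound and match the paper's argument in substance; the necessity case analysis needs to be redone with the corrected forcing condition and with the two failure modes (cycle versus no new forced traffic) assigned to the right parameter ranges.
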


\begin{proof}

Note that $d(x,t) = k$ if and only if the closest neighbor $y$ of $x$ to $t$
has distance $d(y,t) = k-1$. The strategy for $x$ will be to route all messages
to $t$ through $y$.

For one direction, suppose one of the above conditions holds and let $z$ be a
neighbor of $x$ satisfying the desired property. Then $z$ will send messages to
$t$ through $x$, which $x$ can forward through $y$. We further claim that no
message forwarded through $y$ to $t$ will ever come back to $x$. Call $y, v_2,
v_3, \dots$ the vertices on the route taken by the sent message after passing
through $x$. We construct a corresponding path by requiring that $v_2 \neq x$
(as $d(y,t)=k-1$ ensures that $y$ has a neighbor $v_{2}$ such that
$d(v_{2},t)=k-2$) and it follows that the perceived distances $\rho(v_i, t)$
along the message's path from $t$ are monotonically decreasing in $i$. This
follows from the fact that $x$ broadcasts the same lie to all its neighbors. In
particular, $\rho(x,t) = \rho(v_2, t)$ and so for all $j \geq 2$, $v_j$ will
always have a closer neighbor than $x$.

For the converse, suppose the strategy is admissible and beneficial. First, $x$
cannot broadcast $\rho(x,t) < k-2$, or else $y$ (and by minimality all
neighbors of $x$) will route messages to $t$ through $x$, causing a cycle. If
$k' \geq k$, then no neighbor of $x$ would change its behavior, contradicting
beneficialness. This implies the conditions on $k'$ in (1) and (2). Moreover,
beneficialness implies $x$ has a neighbor $z$ that now forwards messages
through $x$, implying its new perceived distance is $\rho(z,t) = k' + 1$. By
the fact that $d(x,t) = k$, every other neighbor $z$ of $x$ has $k-1 \leq
d(z,t) \leq k+1$. If all neighbors have distance $k-1$ then no node is forced
to send messages to $x$ and implies the strategy is not beneficial, so let $z$
be a neighbor with $d(z,t) \geq k$.

If $d(z,t) = k$ then the shortest path from $z$ to $t$ already does not pass
through $x$ and we must choose $k' = k-2$ to change $z$'s behavior. On the
other hand, if $d(z,t) = k+1$ but has no other shortest path to $t$ except
through $x$, then lying is not beneficial. If $z$ has another
path to $t$ then setting $k' = k-1$ breaks the tie.
\end{proof}

One can simplify the above lemma by noting that setting $k' = k-2$ is always
nondetrimental, and this is the optimal nondetrimental lie. So if an agent has
incentive to lie, it may as well lie as much as possible. This motivates the
following corollary.

\begin{corollary} \label{cor:single-agent}

Let $x$ be a single lying agent in $G$ in the uniform local broadcast model. An
optimal admissible strategy for a single lying agent $x$ is to broadcast
$\rho(x,t) = \max(1, d(x,t) - 2)$ for all $t \in V(G)$.  

\end{corollary}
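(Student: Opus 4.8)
The plan is to reduce the corollary to a destination-by-destination application of Theorem~\ref{thm:single-agent}, using two structural facts: that a single agent's broadcast problem decomposes across targets, and that the captured traffic is monotone in the broadcast distances.

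First I would observe that the routing of any message toward a fixed destination $t$ depends only on the perceived distances to $t$, and hence on the single value $\rho(x,t)$ that $x$ chooses for $t$. In particular, whether a deterministic cycle is induced while routing toward $t$ is determined entirely by this column, so admissibility decomposes: a strategy is admissible if and only if, for every destination $t$ separately, the chosen $\rho(x,t)$ does not force a cycle toward $t$. This lets me optimize the broadcast to each target independently and reassemble the results into the full vector $\rho(x,\cdot)$.

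Next I would formalize the remark preceding the corollary that minimizing $\rho(x,t)$ is optimal. With $x$ broadcasting $b = \rho(x,t)$, a node $v$'s perceived distance to $t$ is $\min(d(v,t),\, d(v,x) + b)$, which is pointwise non-increasing as $b$ decreases, and a source $i$ is forced through $x$ precisely when the route via $x$ strictly beats every $x$-avoiding route, i.e. $d(i,x) + b < d(i,t)$. Since $d(i,x)$ and $d(i,t)$ do not depend on $b$, this captured set is monotone non-increasing in $b$; no source already forced through $x$ is released by a smaller broadcast. Hence the smallest admissible value of $\rho(x,t)$ captures a superset of the paths captured by any larger value and is optimal for target $t$.

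Finally I would pin down that minimal value. Writing $k = d(x,t)$, the converse direction of Theorem~\ref{thm:single-agent} shows any $\rho(x,t) < k-2$ deterministically creates a length-$2$ cycle and is inadmissible, while the forward direction (with the remark that $k' = k-2$ is always nondetrimental) shows $k' = k-2$ is admissible. The only remaining constraint is the ban on black holes: $x$ may not advertise $\rho(x,t) = 0$, as this falsely claims to be $t$, so the broadcast must satisfy $\rho(x,t) \geq 1$. Combining these gives minimal admissible broadcast $\max(1, k-2)$; for the boundary cases one checks directly that broadcasting $1$ is admissible ($k=1$ is truthful, and $k=2$ is the tie-breaking lie $k' = k-1$ of condition~(2)). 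Taking this value for every $t$ yields the claimed strategy. The delicate point I expect is handling the tie case $d(i,x) + b = d(i,t)$ consistently with both the worst-case (``every corresponding path'') definition of $p_{S,\theta}$ and the convention that honest agents break ties to avoid cycles, so that the monotonicity and the boundary clamp align exactly.
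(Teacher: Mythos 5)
Your proposal is correct in substance but organized quite differently from the paper's proof, and it has one soft spot worth flagging. The paper's proof of this corollary is devoted almost entirely to \emph{admissibility} of the assembled strategy: length-2 cycles are ruled out by the proof of Theorem~\ref{thm:single-agent}, and cycles of length $m \geq 3$ are ruled out directly, by taking the vertex of minimal true distance $a = d(v_i,t)$ on a hypothetical cycle and deriving a vertex whose perceived distance is at least three below its true distance, contradicting the fact that a lie-by-exactly-two can depress perceived distances by at most two. Optimality, by contrast, is essentially inherited from the converse of Theorem~\ref{thm:single-agent} together with the asserted (not proven) principle that smaller broadcasts capture more traffic. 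You invert this emphasis: your per-target decomposition and the monotonicity of the captured set in the broadcast value $b$ make the optimality half rigorous, which is a genuine improvement over the paper in that direction. One correction there: a source $i$ is captured exactly when $\min(d(i,t),\, d(i,x)+b) < d_{G-x}(i,t)$, not when $d(i,x)+b < d(i,t)$; a source all of whose honest shortest paths already pass through $x$ is captured regardless of $b$. Monotonicity survives this fix, so your conclusion stands.

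The soft spot is the admissibility half, which is exactly the part the paper's proof actually carries out. You justify admissibility of $k' = k-2$ by citing ``the forward direction of Theorem~\ref{thm:single-agent} plus the remark that $k'=k-2$ is always nondetrimental.'' But Theorem~\ref{thm:single-agent} is an iff for \emph{admissible and beneficial}, so its forward direction is only literally available when condition (1) or (2) holds; and the ``always nondetrimental'' remark is an unproven aside whose justification in the paper is this corollary's own proof---so, as written, your argument is circular at this point. The gap is fixable without new ideas: the monotone-decreasing-perceived-distance argument inside the forward direction of Theorem~\ref{thm:single-agent} (after $x$ forwards to its true closest neighbor $y$, every subsequent vertex on the route has a neighbor with perceived distance strictly smaller than $x$'s uniform broadcast, so the message never returns to $x$) nowhere uses beneficialness and applies verbatim when neither condition holds, and honest-only cycles are impossible because converged honest perceived distances strictly decrease along forwarding choices. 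Either spell this out, or reproduce the paper's minimal-distance-vertex contradiction; leaning on the remark alone does not discharge the obligation.
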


\begin{proof}
In the proof of Theorem~\ref{thm:single-agent}, we showed that a vertex $x$
lying in this way cannot produce any cycles of length 2, since it does not
alter the behavior of the neighbor through which $x$ routes messages to $t$. It
remains to show that there are no longer cycles.

Suppose to the contrary that when $s$ tries to send a message to $t$, there is
a cycle $v_1, v_2, \dots, v_m, v_{m+1} = v_1$, with $m \geq 3$. Let $i$ be the
index of a vertex on the cycle which minimizes the true distance $d(v_i, t)$.
Call this distance $a$, and note that $v_i$ is not a colluding agent (or else
it could correctly forward messages so as to break the cycle). Because there is
a cycle, $v_{i+1}$ is broadcasting $\rho(v_{i+1}, t) \leq a-2$, but $d(v_{i+1},
t) \geq a$. And since $v_{i+1}$ forwards to $v_{i+2}$, we have $\rho(v_{i+2},
t) \leq a-3$ while $d(v_{i+2}, t) \geq a$. We claim this is a contradiction: a
colluding agent lies by exactly two less than the truth, and so $v_{i+2}$
cannot be a colluding agent. But the effect of a colluding agent's lie does not
change the perceived distances of any vertex in $G$ by more than two. This
shows the claimed contradiction.
\end{proof}

The same algorithm can be jointly and independently used by multiple colluding
agents in the uniform broadcast model. We make this rigorous with the following
proposition.

\begin{proposition} \label{prop:independent-agents}

If any set of colluding agents lie independently according to
Corollary~\ref{cor:single-agent}, then their joint strategy is admissible.

\end{proposition}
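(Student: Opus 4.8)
The plan is to fix a target $t$ and, for each source $s$, exhibit a corresponding path to $t$; since admissibility only requires the existence of \emph{some} corresponding path (with honest agents free to break ties toward non-cycling options), it suffices to describe one valid forwarding choice at each node and prove that the resulting walk reaches $t$. Recall the strategy of Corollary~\ref{cor:single-agent}: each colluding agent $x$ broadcasts $\rho(x,t)=\max(1,d(x,t)-2)$ and forwards every $t$-bound message to a neighbour lying on a true shortest path to $t$. I will control the walk with the potential $\Psi(v)=\rho(v,t)+d(v,t)$.

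First I would establish the central lemma, the multi-agent analogue of the fact used in Corollary~\ref{cor:single-agent} that a single lie perturbs perceived distances by at most two: for every honest $v$, $d(v,t)-2\le \rho(v,t)\le d(v,t)$ (the lower bound holding for colluders as well, by definition of their broadcast). The upper bound follows by pushing the true shortest path from $v$ to $t$ through the synchronization update, since every agent broadcasts a value no larger than its own true distance. The lower bound I would prove by induction on the synchronization round: if every neighbour $k\sim v$ already satisfies $\rho(k,t)\ge d(k,t)-2\ge d(v,t)-3$ (true for colluders by the definition of their broadcast and for honest nodes by the inductive hypothesis), then $\rho(v,t)=\min(\rho(v,t),\,1+\min_k\rho(k,t))\ge d(v,t)-2$. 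Crucially this uses that colluders never relay a discounted value, so the two-unit discount cannot compound across several liars.

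Next I would verify that $\Psi$ is non-increasing along the walk and strictly decreasing whenever the message sits at an honest agent. At a colluder $x$, forwarding along a true shortest path gives $\Delta d=-1$ while $\Delta\rho\le 1$ by the lemma's upper bound and the colluder's broadcast, so $\Delta\Psi\le 0$. For an honest $v\ne t$ I would show there is a minimum-perceived-distance neighbour $u$ with $d(u,t)\le d(v,t)$: if every minimizer $u$ had $d(u,t)\ge d(v,t)+1$, the lemma forces $\rho(v,t)=d(v,t)$ and pins the minimum value to $d(v,t)-1$, but then a true-shortest neighbour $y'$ of $v$ also attains this minimum while having $d(y',t)=d(v,t)-1$, a contradiction. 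Forwarding to such a $u$ (breaking ties toward $y'$ when needed) gives $\Delta\rho=-1$ and $\Delta d\le 0$, hence $\Delta\Psi\le -1$.

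Finally I would conclude. The walk halts only at $t$, since every other node has an outgoing forward, so it suffices to rule out infinite walks. An infinite walk in a finite graph revisits a vertex and thus contains a closed sub-walk on which $\Psi$ returns to its starting value; being non-increasing, $\Psi$ is then constant on it, which forbids any honest step (those strictly decrease $\Psi$). A closed walk of colluder steps alone is impossible because each strictly decreases the true distance $d\ge 0$. Hence no closed sub-walk exists and the message reaches $t$. I expect the main obstacle to be exactly this interface: colluder steps yield only the non-strict inequality $\Delta\Psi\le 0$, so the argument must supplement the potential with the monotonicity of the true distance along colluder runs, and the honest-step sub-claim must be extracted carefully from the two-sided bound of the lemma---in particular identifying the tie with the true-shortest neighbour $y'$ that makes the favorable tie-break available.
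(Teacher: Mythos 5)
Your proof is correct, but it is structured quite differently from the paper's. The paper argues by contradiction: it assumes a cycle $v_1,\dots,v_m$ exists in a message's path, takes the vertex $v_i$ of minimum true distance $a$ to $t$ on the cycle (which must be honest, since a colluder could forward correctly and break the cycle), and then observes that forwarding along the cycle forces broadcasts $\rho(v_{i+1},t)\le a-2$ and $\rho(v_{i+2},t)\le a-3$ while the true distances stay at least $a$ --- contradicting the fact that no agent's perceived distance can fall more than two below its true distance. Crucially, the paper only \emph{asserts} that last fact (``the effect of a colluding agent's lie does not change the perceived distances of any vertex in $G$ by more than two''), whereas you prove it as your central lemma, by induction on the synchronization rounds, using that colluders broadcast fixed values and never relay discounted ones; this is a genuine strengthening of the write-up, since the non-compounding of lies is exactly the point where a multi-agent argument could fail. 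Your second difference is that you are constructive: instead of refuting a hypothesized cycle, you exhibit an explicit corresponding path by specifying tie-breaks (honest agents forward to a minimizer $u$ with $d(u,t)\le d(v,t)$, whose existence you derive from the two-sided bound $d-2\le\rho\le d$) and drive it to $t$ with the potential $\Psi=\rho+d$, which strictly decreases at honest steps while true distance strictly decreases along colluder runs. This sidesteps a looseness in the paper's argument, namely the step ``because there is a cycle, $v_{i+1}$ broadcasts $\rho(v_{i+1},t)\le a-2$,'' which implicitly requires that no tie-break escapes the cycle; your construction makes the existence of the escaping path explicit. The trade-off is length: the paper's contradiction is shorter once the invariant is granted, while yours is self-contained and verifies the admissibility definition (existence of a corresponding path) directly.
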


\begin{proof}

In the proof of Theorem~\ref{thm:single-agent}, we showed that a vertex $x$
lying in this way cannot produce any cycles of length 2, since it does not
alter the behavior of the neighbor through which $x$ routes messages to $t$. It
remains to show that there are no longer cycles.

Suppose to the contrary that when $s$ tries to send a message to $t$, there is
a cycle $v_1, v_2, \dots, v_m, v_{m+1} = v_1$, with $m \geq 3$. Let $i$ be the
index of a vertex on the cycle which minimizes the true distance $d(v_i, t)$.
Call this distance $a$, and note that $v_i$ is not a colluding agent (or else
it could correctly forward messages so as to break the cycle). Because there is
a cycle, $v_{i+1}$ is broadcasting $\rho(v_{i+1}, t) \leq a-2$, but $d(v_{i+1},
t) \geq a$. And since $v_{i+1}$ forwards to $v_{i+2}$, we have $\rho(v_{i+2},
t) \leq a-3$ while $d(v_{i+2}, t) \geq a$. We claim this is a contradiction: a
colluding agent lies by exactly two less than the truth, and so $v_{i+2}$
cannot be a colluding agent. But the effect of a colluding agent's lie does not
change the perceived distances of any vertex in $G$ by more than two. This
shows the claimed contradiction.
\end{proof}

\subsection{Separated agents}\label{sec:separated}

We now turn to the case of multiple colluding agents. By
Proposition~\ref{prop:uniform-reduction}, we know that allowing neighboring
colluding agents introduces the ability for nonuniform broadcasts. So we
characterize the alternative where no two colluding agents are adjacent. The
optimal strategy we define generalizes to a nontrivial admissible strategy for
the general case in Section~\ref{sec:unseparated}.

For a set $X$ of integers and an integer $j$, define $S_X(j)$ to be the set of all
permutations of $j$ elements from $X$. In this section $C = \{ x_1, \dots, x_k
\}$ will denote the set of colluding nodes, and no two are adjacent in $G$.

\begin{definition}
The \emph{$j$-th colluding distance} between two colluding agents $x$ and $y$
is defined as

\[
   d_j(x, y)= \min_{\substack{\sigma \in S_C(j) \\ \sigma(1) = x \\ \sigma(j) =
y}}
         \sum_{i=1}^{j-1} d(\sigma(i),\sigma(i+1)).
\]

In other words, it is the length of the shortest path from $x$ to $y$ that
contains $j$ colluding nodes. Call any path minimizing this quantity a
\emph{$j$-th colluding path}. If no such path exists, call $d_j(x,y) = \infty$
by convention.  \end{definition}

We will consider $j$-th colluding paths directed from $x$ to $y$ when
appropriate. Now given a set of colluding nodes and a target vertex $t$, we
want to identify the strategy that minimizes $\rho(-,t)$ for all of our
colluders. We start by defining a candidate strategy $\rho^*$, observe that it
is admissible, and then prove it is indeed a lower bound on admissible
strategies.

\begin{definition} \label{def:colluding-distance}
Let $\rho'(x,t) = \max(d(x,t) - 2, 1)$. Let $\rho''(x, t)$ be defined as

\[
   \min_{1 \leq i,j \leq k} \left [ d_j(x,x_i)-2(j-1) + \rho'(x_i, t) \right ].
\]

Then define the strategy $\rho^*(x,t)=\min(\rho'(x,t), \rho''(x, t))$.

\end{definition}

To give some intuition, this strategy takes the minimum of
Corollary~\ref{cor:single-agent} and the best $j$-th colluding path (where the
end of that path uses Corollary~\ref{cor:single-agent} to get to $t$).

Definition~\ref{def:colluding-distance} is useful in the following scenario
depicted in Figure~\ref{fig:colluding-distance}. Suppose $(s, x_1, y_2, x_2,
\dots, y_j, x_j, t)$ is a path of length $2j + 1$, where $t$ is the target of a
message sent by $s$ and $x_i$ are colluding agents.  Then every $x_i$ may
broadcast $\rho^*(x_i, t) = 1$, and 
the honest agents will forward along the path toward $t$.

\begin{figure}[thb]
\centering
\includegraphics[width=\textwidth]{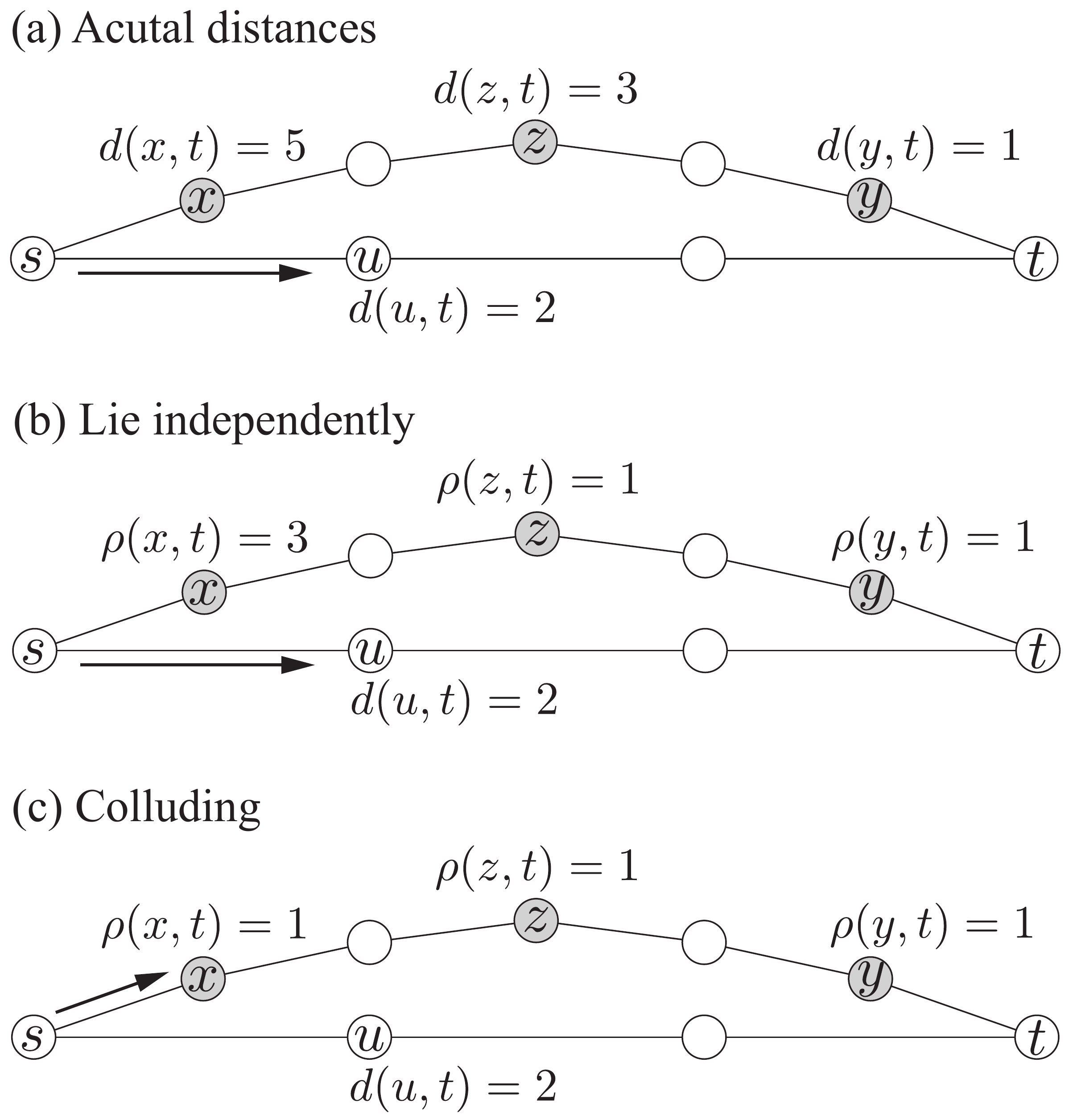}
\caption{An example of strategies for colluding agents $x$,$y$ and $z$ (shaded
circles). Suppose that $s$ wants to send a message to $t$. (a) If no lies
involved, $s$ would send the message to $u$ who is closest to $t$. (b) When
agents $x$,$y$ and $z$ do not collude, the best strategy for $x$ without
causing ``black holes'' is to broadcast $\rho(x,t) = d(x,t) - 2=3$. $s$ would
still send the message to $u$. (c) When taking into account other colluding
agents, $x$ may broadcast $\rho(x,t) = 1$ according to the optimal strategy
defined in \ref{def:colluding-distance}, and deceive $s$ to send the message
through $x$.  } \label{fig:colluding-distance}
\end{figure}

We call $x$ \emph{proper for $t$} if it (strictly) minimizes $\rho^*(x,t)$ via
the $j$-th colluding distance for some $j \geq 2$ and \emph{improper}
otherwise, and we call a $j$-th colluding path realizing this minimization a
\emph{witness} for the properness of $x$. We define the \emph{forwarding
number} of a $j$-th colluding path to be $j$, and define the forwarding number
of a vertex $v$ to be the smallest forwarding number of any $j$-th colluding
path minimizing $\rho^*(v,t)$ starting at $v$. Note that a vertex with a
forwarding number 1 is by definition improper, and that all of these
definitions depend on the choice of $t$.

Improper colluding nodes forward as in Corollary~\ref{cor:single-agent}. Proper
colluding agents pick a $j$-th colluding path which minimizes their broadcast.
We now prove that $\rho^*$ is an admissible strategy. But first, we prove that
$j$-th colluding paths can be extended in a nice way.

\begin{proposition} \label{prop:inductive-path}

Let $x,y$ be colluding vertices in $G$. Then $\rho^*(x,t) \leq d(x,y) - 2 +
\rho^*(y,t)$.

\end{proposition}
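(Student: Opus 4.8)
The plan is to realize the quantity $\rho^*(y,t)$ by an explicit colluding path ending at some colluder, prepend the hop from $x$ to $y$, and then check that the discounts of $2$ built into Definition~\ref{def:colluding-distance} telescope correctly. Throughout I take $x \neq y$, since the inequality is only meaningful for distinct colluders.

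First I would observe that $\rho^*(y,t)$ is always attained by the $\rho''$-branch: setting $j=1$ and $x_i = y$ in the minimization gives the term $d_1(y,y) - 2(1-1) + \rho'(y,t) = \rho'(y,t)$, so $\rho''(y,t) \le \rho'(y,t)$ and hence $\rho^*(y,t) = \rho''(y,t)$. This lets me fix indices $i_0,j_0$ with
$$\rho^*(y,t) = d_{j_0}(y, x_{i_0}) - 2(j_0 - 1) + \rho'(x_{i_0}, t),$$
together with a witnessing $j_0$-th colluding path $y = \sigma(1), \sigma(2), \dots, \sigma(j_0) = x_{i_0}$ (the degenerate case $\rho^*(y,t)=\rho'(y,t)$ corresponding to $j_0=1$, $x_{i_0}=y$).

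In the generic case $x \notin \{\sigma(1),\dots,\sigma(j_0)\}$, the sequence $x, \sigma(1), \dots, \sigma(j_0)$ consists of $j_0+1$ distinct colluders, so it is an element of $S_C(j_0+1)$ running from $x$ to $x_{i_0}$ (and $j_0+1 \le k$). Hence $d_{j_0+1}(x,x_{i_0}) \le d(x,y) + d_{j_0}(y,x_{i_0})$. Feeding $i=i_0$, $j=j_0+1$ into the minimization that defines $\rho''(x,t) \ge \rho^*(x,t)$, and writing the discount $-2\big((j_0+1)-1\big)$ as $-2 + \big(-2(j_0-1)\big)$, every term telescopes to exactly $d(x,y) - 2 + \rho^*(y,t)$, which is the claim.

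The one delicate point — and what I expect to be the main obstacle — is the case where $x = \sigma(m)$ already lies on the witness path, since then prepending $x$ would break the distinctness demanded by $S_C$. Here I would instead pass to the tail sub-path $\sigma(m), \dots, \sigma(j_0)$, a $(j_0 - m + 1)$-th colluding path from $x$ to $x_{i_0}$, and substitute it into $\rho''(x,t)$. After cancelling the common suffix $\sum_{i=m}^{j_0-1} d(\sigma(i),\sigma(i+1))$, the desired inequality collapses to $2m \le d(x,y) + \sum_{i=1}^{m-1} d(\sigma(i),\sigma(i+1))$. This is precisely where the separation hypothesis enters: non-adjacency of the colluders forces $d(\sigma(i),\sigma(i+1)) \ge 2$ for each $i$ and $d(x,y) \ge 2$, so the right-hand side is at least $2 + 2(m-1) = 2m$. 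Checking that the discarded prefix is always ``long enough'' to absorb the accumulated discounts of $2$ is the crux; the generic case above is immediate once the definitions are unwound.
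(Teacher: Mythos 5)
Your proof is correct and follows essentially the same route as the paper's: realize $\rho^*(y,t)$ by a witness colluding path, prepend the hop from $x$ to $y$, and let the $-2$ discounts telescope. The case you flag where $x$ already lies on the witness path (forced by the distinctness built into $S_C(j)$, and resolved via the tail sub-path plus the standing separation hypothesis $d(\cdot,\cdot)\ge 2$ between colluders) is a point the paper's proof passes over silently, so your extra care there is a tightening of the same argument rather than a divergence from it.
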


\begin{proof}

If $y$ is improper, then $\rho^*(x,t)$ has a 2-th colluding path and trivially
$\rho^*(x,t) \leq d(x,y) - 2 + \rho'(y,t)$ so we are done.

So suppose $y$ is proper with forwarding number $j$. By definition, there is a
$y'$ such that $\rho^*(y,t) = d_j(y,y') - 2(j-1) + \rho'(y',t)$. Call the
witness path $\sigma$. Then there is a corresponding path $\sigma'$ for $x$
constructed by prepending a path from $x$ to $y$ to $\sigma$. This path is some
$j'$-th colluding path for $j' > j$ whose cost is an upper bound on
$d_{j'}(x,y')$.  Hence,

\begin{align*}
   \rho^*(x,t) &\leq d_{j'-j}(x,y) - 2(j' - j) + d_j(y,y') - 2(j-1) +
\rho'(y',t) \\
               &\leq d(x,y) - 2 + \rho^*(y,t),
\end{align*}

as desired.
\end{proof}

In fact, if $x$ is proper, $\rho^*(x,t)$ is minimized by computing $d(x,y) - 2
+ \rho^*(y,t)$ for some colluding agent $y$. It will have the property that
there is a path from $x$ to $y$ that passes through no other colluding agents.
Proposition~\ref{prop:inductive-path} trivially extends to non-colluding agents
$y$, giving the following corollary. Note here we use $\rho$ to denote the
broadcast (honest or lie) of any agent.

\begin{corollary} \label{cor:rhostar-bound}

If all colluding agents are following $\rho^*$, then for all $x \in C, y \in
V(G)$, $\rho^*(x,t) \leq d(x,y) - 2 + \rho(y,t)$.

\end{corollary}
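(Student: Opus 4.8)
The plan is to read Corollary~\ref{cor:rhostar-bound} as an extension of Proposition~\ref{prop:inductive-path} from colluding $y$ to arbitrary $y \in V(G)$. When $y \in C$ the two statements already coincide, since every colluder broadcasts $\rho(y,t) = \rho^*(y,t)$, so no new work is needed there. The entire task is therefore to push the inequality through the honest agents, and the natural mechanism is that an honest agent's stationary belief is assembled one hop at a time out of its neighbors' broadcasts via the update rule~\ref{uniformd}. This structure invites an induction on the integer value $\rho(y,t)$.

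I would organize the argument as follows. First I would dispose of the two base cases: $y \in C$, which is exactly Proposition~\ref{prop:inductive-path}; and $y = t$, where $\rho(t,t) = 0$ and the claim reduces to $\rho^*(x,t) \leq d(x,t) - 2$, using $\rho^*(x,t) \leq \rho'(x,t) = \max(d(x,t)-2,1)$. For the inductive step, let $y$ be honest with $y \neq t$. By stationarity of~\ref{uniformd} there is a neighbor $z \sim y$ realizing $\rho(y,t) = 1 + \rho(z,t)$, where $\rho(z,t)$ is the broadcast of $z$; in particular $\rho(z,t) = \rho(y,t) - 1 < \rho(y,t)$, so either $z$ falls under a base case (a colluder or the target) or the inductive hypothesis applies to it. In every case we obtain $\rho^*(x,t) \leq d(x,z) - 2 + \rho(z,t)$, and then the triangle inequality $d(x,z) \leq d(x,y) + d(y,z) = d(x,y) + 1$ gives
\[
   \rho^*(x,t) \leq d(x,z) - 2 + \rho(z,t) \leq (d(x,y)+1) - 2 + (\rho(y,t)-1) = d(x,y) - 2 + \rho(y,t),
\]
as desired. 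Equivalently, one may unroll the induction and read it as tracing $y$'s belief-realizing path back to either $t$ or the first colluder $c$ it meets, then applying Proposition~\ref{prop:inductive-path} together with the single bound $d(x,c) \leq d(x,y) + d(y,c)$; this last route is completely clean because Proposition~\ref{prop:inductive-path} already absorbs all the truncation in $\rho'$ internally.

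The one place that is not purely mechanical is the truncation $\rho'(x,t) = \max(d(x,t)-2,1)$, and I expect this to be the main obstacle. The clean chain above silently uses $\rho'(x,t) \leq d(x,t)-2$, which is valid only once $x$ is at distance at least $3$ from the target; when $x$ lies within distance two of $t$ the broadcast is floored at $1$, and in the base case $y = t$ (equivalently, belief paths that reach the target through honest agents alone) the bound must be checked against this floor rather than through the triangle-inequality chain. I would handle this by verifying the short-distance configurations directly and confirming that the floor value is still dominated by $d(x,y) - 2 + \rho(y,t)$ in the regime relevant to the corollary's later use, where the pairs $(x,y)$ of interest are not adjacent. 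Outside this bookkeeping, the statement is a direct consequence of Proposition~\ref{prop:inductive-path} and the triangle inequality, which is why the extension can reasonably be called trivial.
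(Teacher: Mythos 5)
Your argument is correct and is, in substance, the argument the paper is gesturing at: the paper gives no explicit proof of Corollary~\ref{cor:rhostar-bound}, asserting only that Proposition~\ref{prop:inductive-path} ``trivially extends'' to non-colluding $y$. Your induction on the stationary value $\rho(y,t)$ --- peeling off one hop of the Bellman--Ford recurrence until the belief-realizing path reaches either $t$ or its first colluding agent, then applying Proposition~\ref{prop:inductive-path} together with $d(x,z)\leq d(x,y)+1$ --- is exactly the missing content, and the unrolled version in your second paragraph is the cleanest way to phrase it. You are also right, and in fact more careful than the paper, about the one delicate point: because every broadcast is floored at $1$ by $\rho'(x,t)=\max(d(x,t)-2,1)$, the inequality as literally stated can fail near the target. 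For instance, on the path $x\sim y\sim t$ with $x$ the only colluder, one has $\rho^*(x,t)=1$ while $d(x,y)-2+\rho(y,t)=1-2+1=0$; taking $y=t$ gives a similar failure whenever $d(x,t)\leq 2$. The paper states the corollary without any such caveat, so this is a (minor) gap on the paper's side rather than yours; in the corollary's subsequent uses (Propositions~\ref{prop:rhostar-admissible} and~\ref{prop:rhostar-lower-bound}) the offending short-distance configurations do not arise, which is precisely the qualification your closing paragraph anticipates.
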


Another simple consequence of Proposition~\ref{prop:inductive-path} is that the
forwarding number decreases along minimal $j$-th colluding paths.

\begin{proposition} \label{prop:inductive-forwarding-number}

Suppose $x$ is a proper colluding agent with respect to $\rho^*(x,t)$, that $x$
has forwarding number $j$, and that $\sigma$ is a witnessing $j$-th colluding
path. Let $x'$ be the first colluding agent on $\sigma$ after $x$. Then $x'$
has forwarding number $j-1$.  

\end{proposition}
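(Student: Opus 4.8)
The plan is to pin down the value $\rho^*(x',t)$ exactly, exhibit a $(j-1)$-colluding witness realizing it, and then rule out any witness of smaller forwarding number. Throughout, write the witnessing $j$-th colluding path $\sigma$ as its sequence of colluders $x = c_1, c_2 = x', c_3, \dots, c_j = x_i$, so that $\rho^*(x,t) = d_j(x,x_i) - 2(j-1) + \rho'(x_i,t)$ and $d_j(x,x_i) = \sum_{m=1}^{j-1} d(c_m,c_{m+1})$. The case $j=2$ is immediate (then $x_i = x'$ and the computation below gives $\rho^*(x',t)=\rho'(x',t)$, so $x'$ is improper with forwarding number $1=j-1$), so I would take $j \geq 3$.

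First I would establish that the forwarding number of $x'$ is at most $j-1$, together with the key identity $\rho^*(x,t) = d(x,x') - 2 + \rho^*(x',t)$. The suffix $c_2,\dots,c_j$ is a $(j-1)$-colluding path from $x'$ to $x_i$ of length $d_j(x,x_i) - d(x,x')$, so $d_{j-1}(x',x_i) \leq d_j(x,x_i) - d(x,x')$. Feeding this into the definition of $\rho''(x',t)$ (using index $j-1$ and the same endpoint $x_i$) yields $\rho^*(x',t) \leq \rho''(x',t) \leq \rho^*(x,t) + 2 - d(x,x')$. On the other hand, Proposition~\ref{prop:inductive-path} supplies the reverse bound $\rho^*(x',t) \geq \rho^*(x,t) + 2 - d(x,x')$. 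Hence every inequality in the chain is tight: in particular $d_{j-1}(x',x_i) = d_j(x,x_i) - d(x,x')$ and $\rho^*(x',t) = d_{j-1}(x',x_i) - 2(j-2) + \rho'(x_i,t)$, so the suffix is a $(j-1)$-colluding witness for $x'$ and its forwarding number is at most $j-1$.

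For the matching lower bound, I would argue by contradiction. Suppose $x'$ has forwarding number $m \leq j-2$, witnessed by an $m$-colluding path $\tau$ from $x'$ to some colluder $x_{i'}$, so $\rho^*(x',t) = d_m(x',x_{i'}) - 2(m-1) + \rho'(x_{i'},t)$. Prepending a shortest path from $x$ to $x'$ onto $\tau$ produces an $(m+1)$-colluding path from $x$ to $x_{i'}$, whence $d_{m+1}(x,x_{i'}) \leq d(x,x') + d_m(x',x_{i'})$. Combining this with the identity $\rho^*(x,t) = d(x,x') - 2 + \rho^*(x',t)$ forces $\rho^*(x,t) = d_{m+1}(x,x_{i'}) - 2m + \rho'(x_{i'},t)$, i.e.\ $x$ attains its minimum via a colluding path of forwarding number $m+1 \leq j-1 < j$, contradicting that $x$ has forwarding number $j$.

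The main obstacle is the tacit assumption in the prepending step that $x$ does not already lie on $\tau$, which is what makes $x, x', \dots, x_{i'}$ a genuine repetition-free colluding path. I would dispatch the case $x \in \tau$ directly: splitting $\tau$ at $x$ and using non-adjacency of colluders (each hop between consecutive colluders has length at least $2$, while the bookkeeping term $-2(\,\cdot\,-1)$ subtracts exactly $2$ per colluder) shows that the prefix from $x'$ to $x$ contributes a nonnegative net amount, so the suffix of $\tau$ from $x$ to $x_{i'}$ already attains $\rho^*(x,t)$ via a colluding path whose forwarding number is strictly smaller than $j$ — an even stronger contradiction. Verifying this accounting, namely that deleting the prefix cannot increase the $\rho^*$-value, is the one place demanding care, and it is precisely where the separation hypothesis $d(a,b) \geq 2$ for colluders $a,b$ enters.
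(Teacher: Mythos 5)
Your proof is correct and takes essentially the same route as the paper's: the heart of both arguments is the prepending technique of Proposition~\ref{prop:inductive-path}, used to show that any witness for $x'$ lifts to a minimizing path for $x$ with forwarding number one larger (forcing the forwarding number of $x'$ to be at least $j-1$), while the suffix of $\sigma$ supplies the matching upper bound. The paper compresses all of this into one sentence; your version fills in the details it leaves implicit, namely the tightness chain identifying the suffix as a genuine witness and the degenerate case where $x$ already lies on $x'$'s witness.
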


\begin{proof}

The same technique from the proof of Proposition~\ref{prop:inductive-path}
shows that whatever the forwarding number of $x'$ is, we can prepend a path from
$x$ to $x'$ to get a path with forwarding number $j+1$.
\end{proof}

In particular, the $i$-th visited colluding agent on a witness for
$\rho^*(x,t)$ of forwarding number $j$ has forwarding number exactly $j - i$,
and the end of the path is an improper colluding agent.

At this point one might expect some sensible extension of the pair of
(forwarding number, broadcasted distance) to honest agents would produce a
potential function that is monotonically decreasing along the message path and
zero at the target. Indeed, a version of this is true when the colluding agents
are separated by distance at least three, with ready counterexamples for
distance two. Still, we present a different argument that $\rho^*$ is also
admissible when the agents are distance two apart.
 
\begin{proposition} \label{prop:rhostar-admissible}

Let $C = \{ x_1, \dots, x_k \}$, and suppose that $d(x_i, x_j) \geq 2$ for all
$x_i, x_j \in C$. Then $\rho^*$ is admissible.

\end{proposition}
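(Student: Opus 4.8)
The plan is to fix an arbitrary source $s$ and target $t$ and exhibit a single corresponding path from $s$ to $t$ that terminates at $t$. Since admissibility only demands the existence of \emph{one} corresponding path per ordered pair, it suffices to construct such a path together with a potential function that strictly decreases along it and is zero only at $t$. The potential I would use is the lexicographic pair $\Phi(v)=(n(v),\rho(v,t))$, where $n(v)$ is the forwarding number of a colluder realizing $\rho(v,t)$ along an optimal perceived route out of $v$ (and $n(v)=0$ when some optimal route reaches $t$ through no colluder), taking the \emph{smallest} such forwarding number when several optimal routes are available. The forwarding number is the primary coordinate precisely because crossing a colluder can make the perceived distance jump up by one even as real progress is made; the lexicographic order is designed to absorb that jump.

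First I would pin down the stationary perceived distances. Because colluders never update and each broadcasts $\rho^*(\cdot,t)$, the honest values satisfy $\rho(v,t)=\min_{w\sim v}(1+\rho(w,t))$ with boundary data $\rho(t,t)=0$ and $\rho(x,t)=\rho^*(x,t)$ for $x\in C$. The key mechanism is a consequence of Corollary~\ref{cor:rhostar-bound}: taking $y$ a neighbor of a colluder $x$ gives $\rho^*(x,t)\le \rho(y,t)-1$, and routing through $x$ gives the reverse bound, so (away from the floor $\max(\cdot,1)$) every neighbor of $x$ perceives distance exactly $\rho^*(x,t)+1$. This says a colluder lies by just enough to look like an optimal next hop to all its neighbors without creating a length-two cycle. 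Using it, I would check the clean steps: at an honest vertex the message is forwarded along an optimal perceived route, so $\rho$ drops by one while $n$ cannot increase (the tail of an optimal route is optimal, with the same leading colluder), giving a lexicographic decrease; at a proper colluder $x$ of forwarding number $j$, the message is forwarded along a witnessing $j$-th colluding path, and by Proposition~\ref{prop:inductive-forwarding-number} the next colluder has forwarding number $j-1$, while Propositions~\ref{prop:inductive-path} forces the first honest neighbor on that path to have $n$ at most $j-1$, so $\Phi$ drops in its first coordinate despite $\rho$ rising.

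Repeating, the forwarding number strictly decreases each time the message crosses a colluder, so after finitely many colluders it reaches an improper colluder (forwarding number $1$), which forwards by the single-agent rule of Corollary~\ref{cor:single-agent} along a true shortest path to $t$; there $n$ drops to $0$, and a short check (of the same flavor as Corollary~\ref{cor:single-agent}) shows the remaining honest segment cannot be re-attracted to another colluder, so $\rho$ decreases by one per step down to $\rho(t,t)=0$. Since $\Phi$ takes values in the well-ordered set $\{0,\dots,k\}\times\mathbb{Z}_{\ge 0}$ and strictly decreases at every step of the constructed path, the path is finite and ends at $t$, proving admissibility.

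The main obstacle is the case where two colluders lie at distance exactly two, which is exactly the boundary permitted by the hypothesis $d(x_i,x_j)\ge 2$. Two such colluders share an honest neighbor $u$ and, since the inter-colluder gap contributes $d-2=0$, broadcast the \emph{same} distance; then $u$ sees both as optimal next hops and a naive single monotone potential need not decrease, so the message could in principle oscillate $x\to u\to x$. This is the ``counterexample for distance two'' flagged in the text, and it is why the clean argument above does not suffice verbatim. I would resolve it by invoking the tie-breaking convention built into admissibility: at such a $u$ we route toward the colluder of smaller forwarding number, which is consistent with the lexicographic $\Phi$ and eliminates the two-cycle. I would also treat separately the configurations where the floor $\max(d(\cdot,t)-2,1)=1$ is active near $t$, since there the identity $\rho(y,t)=\rho^*(x,t)+1$ can fail at a colluder's closest neighbor; a direct check shows that neighbor still routes straight to $t$ with $n=0$, so $\Phi$ still decreases and no cycle forms. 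Verifying these boundary cases, and confirming that the tie-breaks chosen locally are globally consistent so that no forced choice sabotages progress elsewhere, is where the real work of the proof lies.
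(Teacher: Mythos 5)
Your proposal takes a genuinely different route from the paper, but it has a real gap. The paper's proof does \emph{not} use a potential function: it supposes a cycle $L$ occurs in the message's path, reduces to Proposition~\ref{prop:independent-agents} when all agents on $L$ are honest or improper, and otherwise takes a proper colluder $y_1$ on $L$ with its minimal colluding path $p$ to $t$, and shows that at the last vertex $y_j$ of $L$ not on $p$ the choice between continuing on $L$ and switching to $p$ is a tie-break. This is proved by expanding $\rho^*(x,t)+2$ along $p$ (where $x$ is the last colluder on $p$ before $y_j$) and bounding the same quantity along $L$ via Corollary~\ref{cor:rhostar-bound}; a strict inequality would yield $\rho^*(x,t)+2 < \rho^*(x,t)+2$. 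Since a tie can be broken in favor of $p$, a corresponding path exists. The paper explicitly warns that the lexicographic potential $(\text{forwarding number}, \rho)$ you propose ``has ready counterexamples for distance two,'' which is exactly the boundary the hypothesis $d(x_i,x_j)\ge 2$ admits — so you have chosen the approach the authors considered and abandoned.

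You do flag the distance-two difficulty, but your resolution is deferred rather than carried out: you write that verifying the boundary cases and the global consistency of tie-breaks ``is where the real work of the proof lies,'' which concedes the proof is incomplete at its hardest point. Two specific claims are also unsubstantiated or false as stated. First, your assertion that once the message leaves the terminal improper colluder ``the remaining honest segment cannot be re-attracted to another colluder'' does not hold in general: an honest vertex on that tail may have a colluding neighbor whose $\rho^*$ broadcast is far below $d(\cdot,t)-2$ (via a long colluding path), so the argument of Corollary~\ref{cor:single-agent} — which relies on lies being at most two below the truth — does not transfer to $\rho^*$. Second, your quantity $n(v)$ for honest $v$ is defined via ``an optimal perceived route out of $v$,'' but perceived distances do not decrease monotonically through colluders (a colluder forwards to a vertex of strictly larger perceived distance), so such routes and hence $n(v)$ are not well defined without further work; showing $n$ cannot increase at an honest step is essentially the whole difficulty. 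To repair the argument you would either need to construct and verify this potential rigorously in the distance-two and floor-at-one regimes, or switch to the paper's cycle-plus-tie-break contradiction, which sidesteps the potential entirely.
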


\begin{proof}

Let $s,t$ be arbitrary vertices in $G$, and let $L = (y_1, y_2, \dots, y_m)$ be
a cycle in the path of a message sent from $s$ to $t$ (possibly infinite and
repeating). If the $y_i$ are all honest or improper agents we are reduced to
the case of Proposition~\ref{prop:independent-agents}. So some of the $y_i$
must be proper colluding agents. 

Without loss of generality suppose $y_1$ is a colluding agent, and let $p$ be
the minimal colluding path it forwards along, extended to the target $t$.  Let
$y_j$ be the last vertex on $L$ that is not also on $p$. The claim is that
$y_j$'s decision to forward along $L$ or $p$ is a tie break. This proves the
proposition because we can then construct a corresponding path. 

Let $z$ be the vertex following $y_j$ on $p$, and suppose to the contrary
$\rho(y_{j+1}, t) < \rho(z,t)$. Let $x$ be the last colluding agent on $p$
before $y_j$. Let $x'$ be the first colluding agent on $p$ after $y_j$ (it may
be the case that $x' = z$). If $x$ is the last colluding agent on $p$, then let
$x'=t$ and the proof proceeds similarly. First we expand $\rho^*(x,t) + 2$
along $p$.

\begin{align*}
   \rho^*(x,t) + 2 &= d(x,x') + \rho(x',t) \\  
                   &= d(x,y_j) + 1 + d(z,x') + \rho(x',t) \\
                   &= d(x,y_j) + 1 + \rho(z,t)
\end{align*}

We now bound $\rho^*(x,t) + 2$ along $L$ using
Corollary~\ref{cor:rhostar-bound}. 

\begin{align*}
   \rho^*(x,t) + 2 &\leq d(x, y_j) + 1 + \rho(y_{j+1}, t) \\  
                   &< d(x, y_j) + 1 + \rho(z,t) \\ 
                   &= \rho^*(x,t) + 2
\end{align*}

A contradiction.
\end{proof}

Next we prove that in the separated setting $\rho^*$ is a lower bound on
admissible broadcasts. 

\begin{proposition} \label{prop:rhostar-lower-bound}

Any colluding agent broadcasting $\rho(x,t) < \rho^*(x,t)$ necessarily causes a
cycle.

\end{proposition}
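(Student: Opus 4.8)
The plan is to prove the contrapositive by a descent argument: assuming the colluders follow an \emph{admissible} strategy $\rho$ for which $\rho(x,t)<\rho^*(x,t)$ at some colluder $x$ and target $t$, I will show that the corresponding path from $x$ to $t$ guaranteed by admissibility cannot actually reach $t$, a contradiction. The quantity I track is the \emph{deficiency} $D(v):=\rho^*(v,t)-\rho(v,t)$ evaluated at the colluding agents the path visits; by hypothesis $D(x)\ge 1$, and I will show $D$ never decreases along the path yet is forced to be nonpositive at the last colluder before $t$.

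First I would record the local behaviour of a corresponding path. Along any honest stretch the perceived distance drops by exactly one per step, since an honest $v$ forwards into $b_{v,t}=\mathrm{argmin}_{k\sim v}\rho(k,t)$; and for the first honest agent $u_1$ immediately after a colluder $c$ we have $\rho(u_1,t)\le \rho(c,t)+1$ merely because $c$ is a neighbour of $u_1$. Because the colluders are separated, each colluder forwards into an honest neighbour and the interior of the stretch between two consecutive colluders $c,c'$ on the path is entirely honest; if that stretch has length $\ell$ (a walk, so $\ell\ge d(c,c')$), chaining the honest descent with the bound at $u_1$ gives $\rho(c',t)\le \rho(c,t)-(\ell-2)$.

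The crux is to combine this with the triangle-type inequality of Proposition~\ref{prop:inductive-path}, namely $\rho^*(c',t)\ge \rho^*(c,t)-d(c,c')+2$. Subtracting yields
\[ D(c') \ge D(c) + (\ell - d(c,c')) \ge D(c), \]
so the deficiency is nondecreasing along the colluders on the path; in particular the last colluder $c$ before $t$ satisfies $D(c)\ge D(x)\ge 1$. To finish I analyze the terminal honest stretch $c=u_0,u_1,\dots,u_m=t$: honest descent from $u_1$ forces $\rho(u_1,t)=m-1$, the neighbour bound then gives $\rho(c,t)\ge m-2\ge d(c,t)-2$, and since a colluder may never advertise distance $0$ we also have $\rho(c,t)\ge 1$. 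Hence $\rho(c,t)\ge\max(d(c,t)-2,1)=\rho'(c,t)\ge\rho^*(c,t)$, i.e.\ $D(c)\le 0$, contradicting $D(c)\ge 1$.

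I expect the main difficulty to be identifying this potential and proving its monotonicity uniformly, rather than any single computation. The naive generalization of the single-agent converse in Theorem~\ref{thm:single-agent}---arguing that every neighbour of $x$ routes straight back to $x$---breaks down once other colluders lie, because a neighbour may instead route toward a \emph{different} cheating colluder and escape the immediate $2$-cycle. The deficiency $\rho^*-\rho$ is precisely the bookkeeping that converts each such escape into another under-broadcasting colluder without losing ground, the slack coming from $\ell\ge d(c,c')$. Two boundary points deserve care: the terminal estimate when $d(c,t)\le 2$ (where $\rho'(c,t)=1$ does the work), and the degenerate case in which an honest neighbour of a colluder has that colluder as its unique closest neighbour---this is itself the promised $2$-cycle, so the path-following argument may assume no such immediate bounce occurs.
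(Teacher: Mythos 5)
Your proof is correct, but it takes a genuinely different route from the paper's. The paper argues purely locally: if $x$ broadcasts below $\rho^*(x,t)$, then for every neighbour $z$ of $x$ and every other neighbour $y$ of $z$, Corollary~\ref{cor:rhostar-bound} gives $\rho(y,t) \geq \rho^*(x,t) - d(x,y) + 2 \geq \rho^*(x,t) > \rho(x,t)$, so $x$ is the unique minimizer for all of its neighbours and any message at $x$ destined for $t$ bounces straight back --- an explicit $2$-cycle in two lines. You instead run a global descent argument along the corresponding path from $x$ to $t$ whose existence admissibility would guarantee, showing the deficiency $\rho^* - \rho$ is nondecreasing across consecutive colluders (combining the exact unit decrement of perceived distance along honest stretches with Proposition~\ref{prop:inductive-path}) yet nonpositive at the last colluder. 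Each approach buys something: the paper's is far shorter and exhibits the cycle concretely, but it leans on Corollary~\ref{cor:rhostar-bound}, which is stated under the hypothesis that \emph{all} colluders follow $\rho^*$ and whose extension to honest $y$ the paper only asserts; your argument never needs a lower bound on honest agents' perceived distances beyond what the path structure forces, and it makes no assumption about what the other colluders broadcast, so it establishes the slightly stronger statement that an under-broadcast at $x$ destroys every corresponding path from $x$ to $t$ regardless of the rest of the configuration. Your boundary cases ($d(c,t)\le 2$ handled by the floor $\rho(c,t)\ge 1$, and the separation hypothesis guaranteeing each inter-colluder stretch has honest interior of length at least $2$) are exactly the ones that need checking, and you check them. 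The one thing worth making explicit if you write this up is that the unit-decrement claim $\rho(v_{i+1},t)=\rho(v_i,t)-1$ for honest $v_i$ uses that the stationary perceived distance satisfies $\rho(i,t) = 1+\min_{k\sim i}\rho(k,t)$ with equality, which follows from the $\infty$ initialization in equation~(\ref{uniformrho}) but is not literally equation~(\ref{uniformd}).
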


\begin{proof}

Suppose to the contrary some colluding agent $x$ broadcasts $\rho(x,t) <
\rho^*(x,t)$. We will show that all neighbors of $x$ forward to $t$ through
$x$, necessarily causing cycle of length 2. Fix any neighbor $z$ and suppose to
the contrary that there is a neighbor $y \neq x$ of $z$ with $\rho(y,t) \leq
\rho(x,t)$. But then $\rho(x,t) < \rho^*(x,t) \leq d(x,y) - 2 + \rho(y,t) =
\rho(y,t)$ by Corollary~\ref{cor:rhostar-bound}, a contradiction.
\end{proof}

This completes the proof of Theorem~\ref{thm:optimal-separated}.

Finally, $\rho^*$ can be efficiently computed. The idea is to grow a search
tree of colluding agents from $t$, noting that the value of $\rho^*$ for a new
vertex is minimized by using some set of previously visited nodes. More
rigorously, for each target $t \in V(G)$ run the following procedure. Set $S =
\{ t \}$.  While $C \not \subset S$ is missing some colluding agent, take any
colluding agent with minimal distance to $S$ (true distance in $G$), and
calculate the value of $\rho^*(x,t)$ as $\rho^*(x,t) = \min_{y \in S} d(x,y) -
2 + \rho(y,t)$. Then add $x$ to $S$ and continue. Using the same arguments used
previously, it is easy to see that this will compute the correct value of
$\rho^*$ for every colluding agent and every target. Moreover, one can
construct the corresponding $j$-th colluding paths during this process. We
provide some example simulations of using this strategy on synthetic and
real-world networks in Section~\ref{sec:simulations}.

\subsection{Adjacent colluding agents} \label{sec:unseparated}

In this section we extend the strategy from Section~\ref{sec:separated} to the
setting where colluding agents may be adjacent in the network. We show this
generalization is not optimal, and instead give a family of strategies, one of
which must be optimal.

Before we state our theorems, we describe another connection between the
uniform and non-uniform models from Section~\ref{sec:models}, that we can
transform an instance of the uniform model into an instance of the nonuniform
model in which colluding agents are separated.  Specifically, one can take the
quotient $G/\sim$ of the graph $G$ by declaring two colluding agents to be
equivalent if they are in the same connected component in the induced subgraph
of colluding agents. Uniform strategies translate into nonuniform ones as
follows. If $A$ is a connected component of colluding agents collapsed to $v_A$
with neighbors $\partial_G A = N_{G/\sim}(v_A)$, then the broadcast for $v_A$
to a neighbor $w$ is the minimum over all such broadcasts from vertices in $A$.
Whenever the forwarding policy in $G$ had the form: ``receive from some $w$
with target $t$ at $x \in A$, forward through $A$ to some final node $y \in A$,
who forwards to $w'$,'' the forwarding policy in $G/\sim$ is: ``Forward
messages from $w$ with target $t$ to $w'$.''

Moreover, the concepts of forwarding number and colluding paths defined in
Section~\ref{sec:separated} for separated agents in the uniform model have
analogous definitions in the nonuniform model.  So when we say that a component
$A \subset V(G)$ has a minimal $j$-th colluding path, the $j$ refers to the
path in the quotient graph, which lifts to a path in $G$ (one of many, and
possibly involving many more than $j$ colluders).  The strategy of forwarding
along a minimal colluding path lifts from the quotient graph to a strategy that
forwards along paths between connected components.

With this understanding, the main strategy can be sketched as follows. Each
connected component of colluding agents $A \subset C$ determines a minimal
$j$-th colluding path $p_t$ for each target $t \in V(G)$ using the algorithm
from Section~\ref{sec:separated}. Pick any $x \in A$ which is adjacent to the
first honest vertex $w$ on $p_t$, call this the \emph{$t$-exit node} for $A$,
and have $x$ broadcast $\rho(x,t) = \rho(w,t) - 1$ as usual. If every
non-$t$-exit node in $A$ broadcasts so that the message never returns to $A$,
then the proof of Proposition~\ref{prop:rhostar-admissible} generalizes to
prove no cycles occur for this strategy. 

We now describe bounds on the minimality of such broadcasts. For $A \subset
V(G)$, denote by $d_{G - A}(x,y)$ the distance from $x$ to $y$ in the subgraph
induced by $V(G) - A$. When $A = \{ a \}$ is a single node we abuse notation
and write $d_{G-a}$. We further write $\rho_{G-A}$ to denote the
perceived/broadcast distances when $A$ is removed. Note these values change for
honest agents when paths are eliminated, but not for colluding agents. 

As a simple illustrative first case, suppose there are exactly two colluding
agents $x,y$ and they are adjacent. Let $t \in V(G) - \{ x,y \}$. If $y$
forwards a message to $x$, who in turn forwards to $t$ through $w \neq y$, then
in order to prevent the message cycling back though $y$, we require
$d_{G-x}(w,y) + \rho(y,t) \geq d(w,t)$, which rearranges to give a condition on
$y$'s broadcast. If $d_{G-x}(w,y) = \infty$ this is interpreted as no
restriction, and $\rho(y,t)$ may be 1.

For a connected component $A$ and target $t$, a similar bound is imposed on
every node in $A$ which is not the $t$-exit node. We state it as a theorem.

\begin{theorem}\label{thm:generalization}

Let $G$ be a graph, $C \subset V(G)$ be a subset of colluding agents whose
induced subgraph has components $C_1 \cup \dots \cup C_s$. For each component
$C_i$ and target $t$, pick a $t$-exit node $v_{i,t} \in C_i$, who behaves as
described above, and have every $x \in C_i$ forward messages with target $t$ to
$v_{i,t}$. Call $w_{i,t}$ the node that $v_{i,t}$ forwards to. Pick any
broadcast of the non-exit nodes $x \in C_i$ such that for all $j$ with $C_j$
having no larger forwarding number,

\[ 
    \rho(x,t) \geq \rho(w_{j,t},t) - d_{G-(C_j - \{ x \} )}(w_{j,t}, x),
\]

setting $\rho(x,t) = 1$ if all of the above bounds are nonpositive or
$-\infty$. 

This strategy is admissible.

\end{theorem}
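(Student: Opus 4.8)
The plan is to reduce to the already-settled separated case by collapsing each colluding component to a super-node and verifying that the prescribed broadcasts make a target-$t$ message behave, on the quotient $G/\sim$ of Section~\ref{sec:unseparated}, exactly like the admissible strategy $\rho^*$ of Proposition~\ref{prop:rhostar-admissible}. Since distinct components share no edge of $G$, the super-nodes are pairwise non-adjacent in $G/\sim$, so the separation hypothesis $d(\cdot,\cdot)\ge 2$ of Proposition~\ref{prop:rhostar-admissible} holds in the quotient. The only two phenomena that can break this reduction are (i) a message cycling inside a single component, and (ii) a message that has already left a component being lured back into it; ruling these out is the substance of the argument.

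Issue (i) is immediate from the forwarding rule: within $C_i$ every node routes target-$t$ traffic to the fixed exit node $v_{i,t}$ along a simple internal path, so no intra-component cycle occurs, and each component acts as a single forwarding super-node whose outgoing broadcast $\rho(v_{i,t},t)=\rho(w_{i,t},t)-1$ is exactly the value $\rho^*$ assigns to that super-node in $G/\sim$.

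The main work, and the step I expect to be the chief obstacle, is issue (ii): I would show that the lower-bound hypothesis on the non-exit nodes is precisely the statement that a message can never return to a component. Fix a non-exit node $x\in C_i$ and a component $C_j$ of forwarding number no larger than that of $C_i$. Rearranging the hypothesis to $\rho(x,t)+d_{G-(C_j-\{x\})}(w_{j,t},x)\ge \rho(w_{j,t},t)$ says that the perceived distance to $t$ obtained by leaving the exit point $w_{j,t}$, travelling to $x$ without touching the rest of $C_j$, and then trusting $x$'s broadcast is no smaller than the perceived distance already available at $w_{j,t}$. Hence no honest vertex on a corresponding path running from $w_{j,t}$ toward $t$ can strictly improve its perceived distance by steering back into $C_j$ through $x$, so such a path can always be continued without re-entering $C_j$. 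The restriction to components of no larger forwarding number is what makes this consistent: the forwarding number strictly decreases along minimal colluding paths (Proposition~\ref{prop:inductive-forwarding-number}), so it furnishes a global ordering in which returns need only be forbidden ``downward,'' and the displayed bounds block exactly those returns (the case $j=i$ forbidding self-return, the cases $j\ne i$ cross-returns).

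With (i) and (ii) in hand, I would finish by transplanting the cycle argument of Proposition~\ref{prop:rhostar-admissible} to the quotient. Given any purported cycle $L$ in a message's route to $t$, the no-return property forces $L$ to project to a genuine cycle among distinct super-nodes and honest vertices in $G/\sim$; if it meets no proper component we are back in the single-agent situation of Proposition~\ref{prop:independent-agents}, and otherwise we take a super-node on $L$ of maximal forwarding number, follow its minimal colluding path $p$, and locate the last vertex $y_j$ off $p$ exactly as before. Expanding $\rho^*(\cdot,t)+2$ along $p$ (equality) and bounding it along $L$ via Corollary~\ref{cor:rhostar-bound} (strict inequality) reproduces the same contradiction, showing the deviation at $y_j$ is a tie that can be broken toward $p$, whence a corresponding path to $t$ exists. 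The one remaining piece of bookkeeping I would watch carefully is the lift between $G$ and $G/\sim$ — in particular, checking that the Corollary~\ref{cor:rhostar-bound} bound still holds when the relevant paths are required to avoid the interiors of components — which is precisely where the non-exit broadcast hypothesis does its work.
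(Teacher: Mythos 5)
Your proposal is correct and follows essentially the same route as the paper: the paper likewise reduces admissibility to showing a message that has left a component via its $t$-exit node never returns, and derives this from the displayed broadcast bound by observing that any would-be return to a non-exit node $x$ forces equality in that bound, i.e.\ a tie-break at $w_{j,t}$, after which the cycle argument of Proposition~\ref{prop:rhostar-admissible} carries over. Your write-up is somewhat more explicit about the quotient-graph bookkeeping and the role of the forwarding-number ordering, but the underlying argument is the same.
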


\begin{proof}

As discussed above, it suffices to show that a message for $t$ forwarded
through $C_i$ to the $t$-exit node $v_{i,t}$ never returns to $A$. Suppose to
the contrary the message follows some path $p$ hitting $x \in C_i$. Without
loss of generality, $x \neq v_{i,t}$ is the first to receive the message. Now
$\rho(w_{i,t}, x) \geq d_{G-(C_i - \{ x \})}(w_{i,t}, x)$, and so by assumption
(that $x$ gets the message), they are equal and $w_{i,t}$ is in a tie-break
situation.
\end{proof}

This strategy is not optimal. Figure~\ref{fig:forwarding-ctex} gives a
counterexample, in which the central issue is that two components which are
tied for minimal forwarding number could improve their joint strategy by having
one component forward through the other. In contrast to the separated case, a
different choice of forwarding policy implies different broadcasts for the
nodes. Indeed, if $X$ is the space of all strategies induced by all possible
forwarding configurations and the implied broadcasts from
Theorem~\ref{thm:generalization}, it is easy to see that an optimal strategy is
a member of $X$.  Still, it is unclear whether it is NP-hard to pick the
optimal forwarding policy.

\begin{figure}
\includegraphics[width=0.5\textwidth]{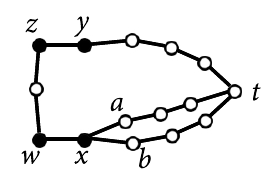}
\caption{A counterexample to the optimality of our strategy in the setting
where agents can be adjacent. Shaded nodes are colluding. If the component with
$z,y$ is processed first then our algorithm correctly chooses the $w,x$
component to have forwarding number 2 (with $x$ broadcasting 1), capturing all
traffic from $a,b$ to $t$.  On the other hand if $w,x$ is processed first the
result will miss traffic from $a,b$.} \label{fig:forwarding-ctex}
\end{figure}

\section{Simulations} \label{sec:simulations}
 
We simulated the protocols described in this paper on four networks (Figure
\ref{fig:erdos-renyi}). (The code used to run the experiments is available at
https://github.com/j2kun/information-monitoring-simulations.) The first is an
Erd\"os-R\'enyi random graph $G(n, p)$ where $n = 1000, p = 4/1000$. The second
is a preferential attachment model~\cite{BarabasiA99} with 100 nodes. The third
is a Watts-Strogatz model~\cite{WattsS98} with $n=1000$, degree $k$=10, and
edge-rewiring probability $\beta=0.04$. The fourth is a snapshot of the
Autonomous Systems (AS) subgraph of the United States, which has 9,296 nodes
and 17,544 edges. The AS graph comes from the website of
Newman~\cite{Newman06}. 

For each network, we inspect the potential advantage of the $\rho^*$ strategy
of Section~\ref{sec:separated} over the strategy in which all agents act
independently according to $\rho(x,t) = d(x,t) - 2$, and for each strategy we
also compared the performance for a randomly chosen subset of nodes versus
nodes of highest degrees. For various number of colluding agents, we report the
number of paths intercepted by the colluding agents in the worst-case scenario,
i.e., when there are multiple  shortest paths (some of which may be ``perceived
shortest paths'') between the source and target of a message, the message would
go through the paths with no colluding agents.

It is worth noting that for each of these cases, we assume the demand between
every pair of nodes is uniform. That is, there is no advantage for the
colluders to intercept one shortest path over another.

Often, large degree nodes tend to be better colluders than randomly chosen
nodes. Regardless, the benefit of colluding is clear even for randomly chosen
nodes. In fact for the US AS network, if we compromise only 18 random nodes
(roughly 0.2\%) we can intercept an expected 10\% of the entire network's
traffic. For smaller percentages, the randomly selected colluding nodes capture
more traffic in the US AS network than in the synthetic models. However, for
larger percentages the amount of captured traffic in the US AS network levels
off quite dramatically, revealing additional topological structure in the US AS
network that is not present in the synthetic models. It is also interesting to
note the relatively small difference between the two strategies in the US AS
graph, implying that in this setting collusion does not provide a
quantitatively large improvement over a simpler strategy. We stress that since
the $\rho^*$ strategy (as defined in Section~\ref{sec:separated}) may not be
optimal when there are adjacent colluding nodes, the plots are lower bounds for
the amount of traffic captured by the optimal strategy with a given percentage
of colluding nodes. For uniform broadcasts and small random colluding sets, our
estimates are fairly accurate because with high probability none of the
colluders are adjacent. On the other hand, for nonuniform broadcasts the
optimal strategy for 18 random nodes could very well capture more than 10\% of
messages on the AS networking with uniform traffic.

\begin{figure*}[htbp]
\centering
\subfigure{
\includegraphics[width=0.45\textwidth]{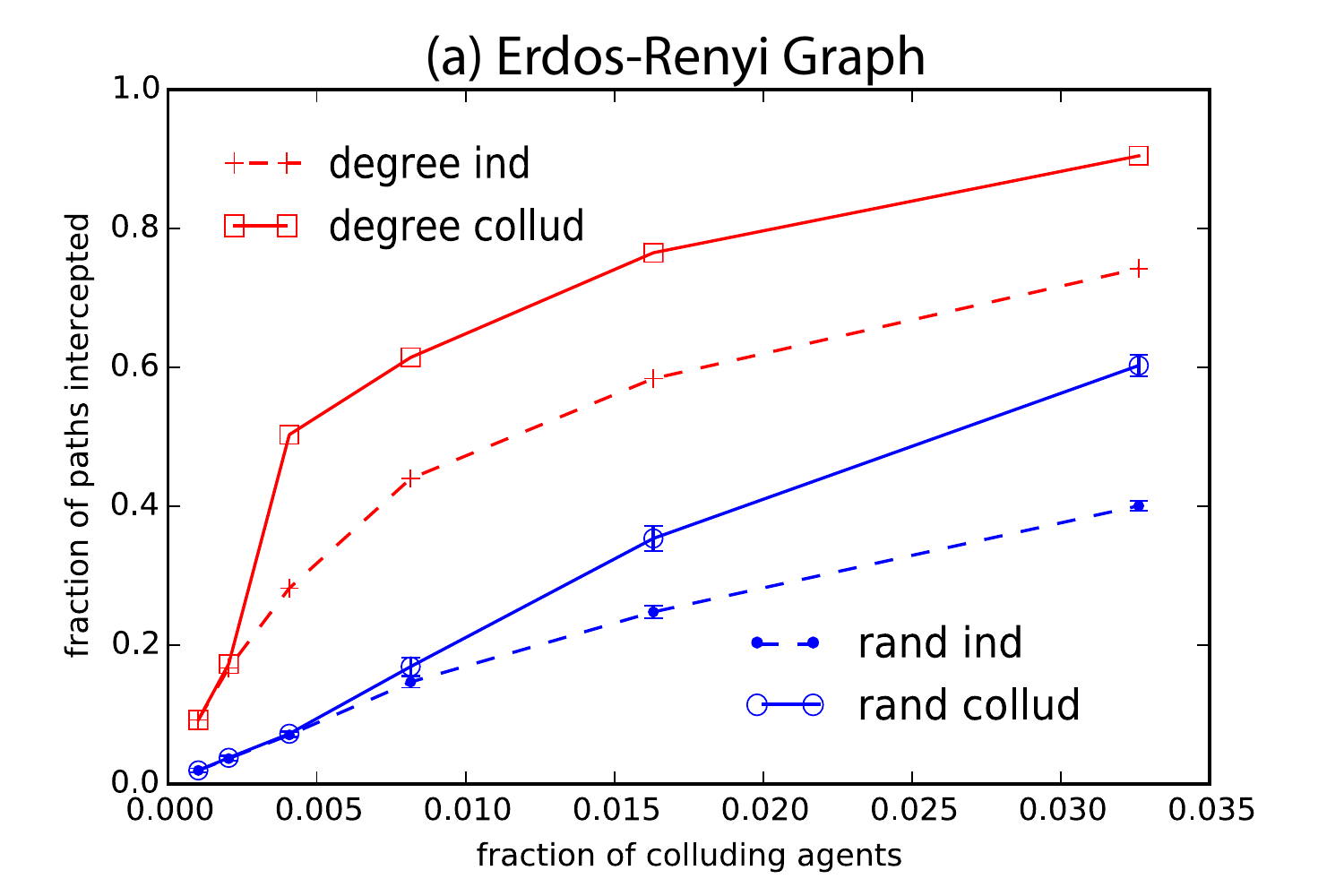}
}
\quad
\subfigure{
\includegraphics[width=0.45\textwidth]{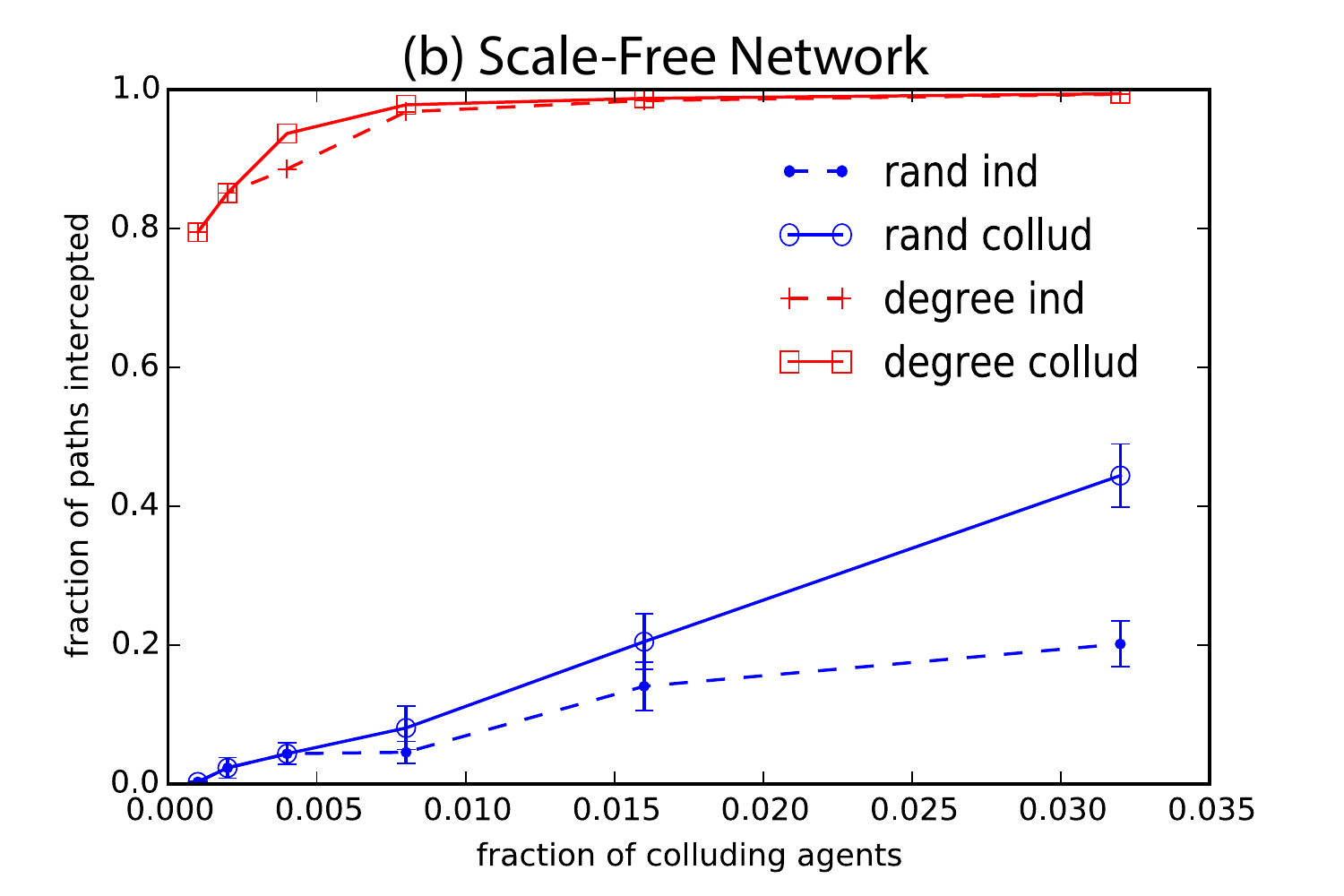}
}
\\
\subfigure{
\includegraphics[width=0.45\textwidth]{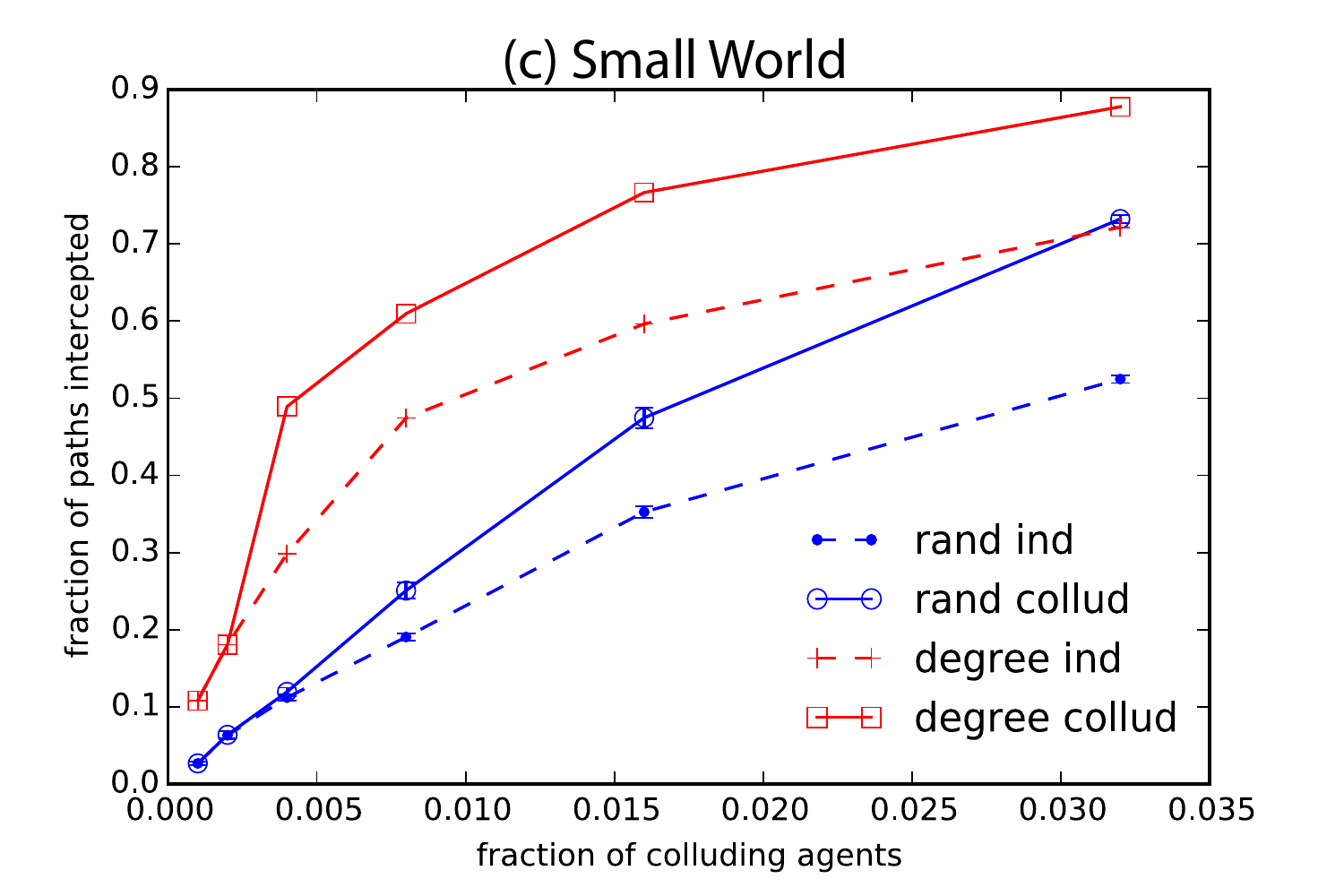}
}
\quad
\subfigure{
\includegraphics[width=0.45\textwidth]{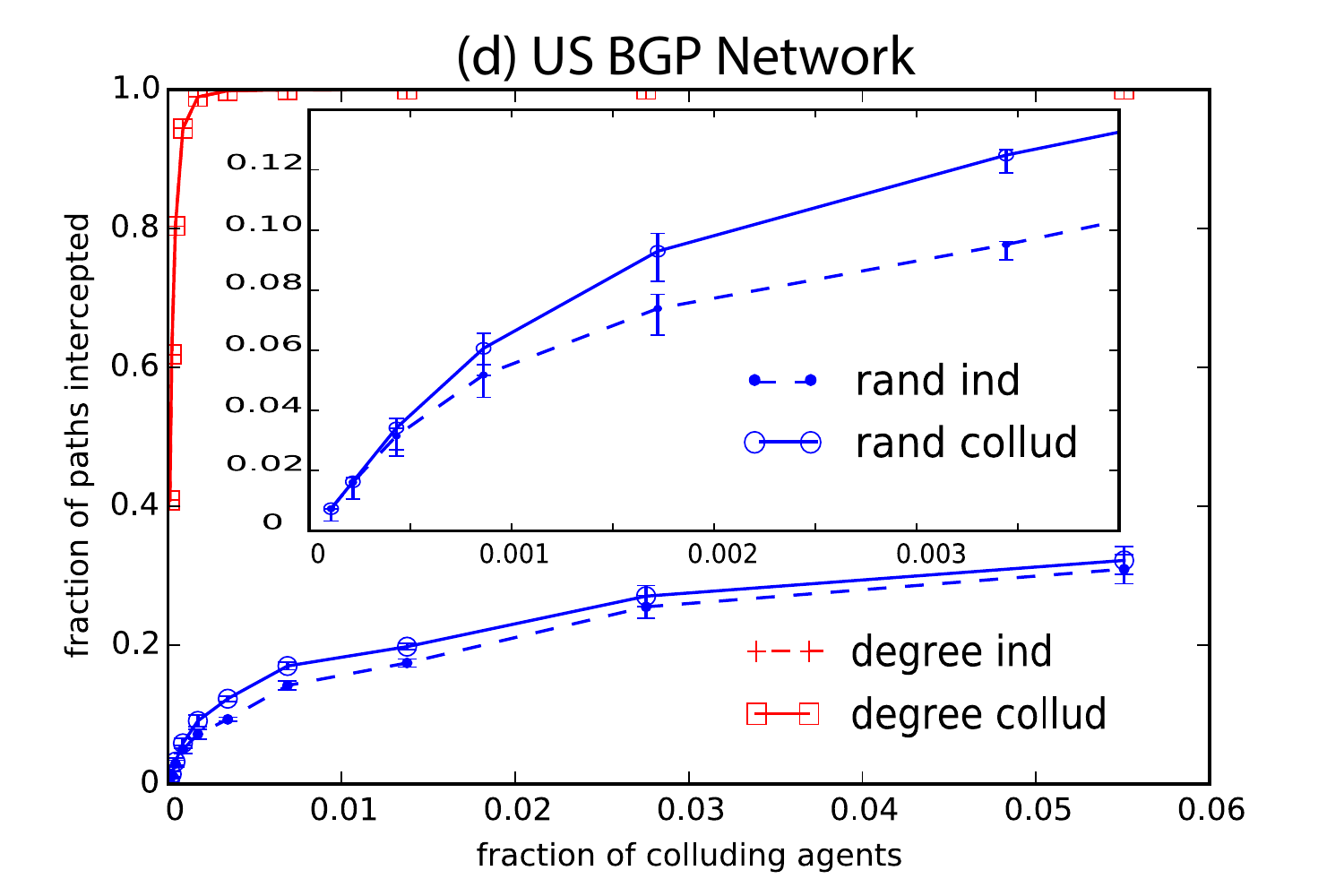}
}


\caption{Fraction of paths intercepted by a varying number of colluding agents
on a Erd\"os-R\'enyi random graph (top left), a preferential attachment graph
(top right), a Watts-Strogatz model (bottom left), and the US AS network
(bottom right). Blue curves correspond to subsets of colluding agents chosen
uniformly at random, while red curves to subsets chosen by largest degrees.
Dashed curves correspond to the strategy where each agent independently lies by
an additive factor of two, while solid curves to the optimal separated strategy
of Section~\ref{sec:separated}. The inset graph for the AS network magnifies
the leftmost portion of the blue curves.}
\label{fig:erdos-renyi}
\end{figure*}

\section{Discussion and open problems} \label{sec:conclusion}

In this paper we introduced and related two variants of a model of traffic interception in routing networks with distance-vector protocols, one for uniform broadcasts and one for non-uniform
broadcasts. We characterized the optimal strategy for the uniform setting in
which no two colluding agents are adjacent, and provided a family of strategies
for the general case. We simulated the impact of the optimal strategy in the
uniform broadcasting model on an assortment of graphs and found that in
expectation for the US Autonomous Systems network, randomly selecting 0.2\% of
the nodes to act as colluding agents captured 10\% of the entire network
traffic paths.

There are a number of directions for future work.  For example, we could consider alternate
definitions of admissibility.  In this work we defined strategies to be admissible 
as long as there exists a corresponding path between every pair of nodes, motivated by the model assumption
that honest nodes break ties uniformly at random.  Instead we can consider different definitions of admissibility, 
where for every pair of nodes {\em{all} }corresponding paths that start and end with those nodes must also be simple paths.   
In the case of the uniform model with a single colluding agent $x$,broadcasting the distance to $t$, $\rho(x,t)=\min(d(x,t)-1,1)$ 
satisfies this more restrictive definition of admissibility.   Alternatively, we may call a strategy admissible if
the length of a typical corresponding path under the strategy does not drastically differ from the length of a typical corresponding path under the honest strategy.  The last admissibility definition is not meant to be taken as a rigorous one; we only wish to illustrate the diversity of these alternate admissibility definitions.

We could also ask  whether one can efficiently characterize the general case of adjacent agents,
or whether deciding the appropriate forwarding mechanism is NP-hard. In either
case, another open direction is to provide approximation algorithms when the
optimal subset of colluding agents is unknown. While this is likely to correlate with
betweenness centrality we would also be interested in finding the
subset of agents with the largest \emph{relative} improvement. A further
improvement to this work may also consider realistic network traffic, as
opposed to considering an absolute number of traffic paths as the interception
objective.

\section*{Acknowledgement} 

We thank everyone involved in the 2014 AMS Network Science Mathematical
Research Community for inspiration and many helpful comments, especially the
organizers Aaron Clauset, David Kempe, and Mason Porter. We also thank Lev
Reyzin for his helpful comments.

Research supported in part by NSF grants 
CMMI-1300477 and CMMI-1404864. 

\bibliographystyle{plain}
\bibliography{main}

\end{document}